\documentclass[letterpaper,twocolumn,10pt]{article}

\newif\ifarxiv
\arxivtrue

\usepackage{usenix}

\usepackage{amsmath,amssymb,amsthm}
\usepackage{booktabs}
\usepackage{graphicx}

\usepackage{pifont}                      
\usepackage{tikz}
\usetikzlibrary{arrows.meta, positioning, fit, backgrounds, calc, shapes.geometric, shadows}

\definecolor{mosKey}{HTML}{1F4E79}\definecolor{mosKeyBg}{HTML}{E7EFF7}
\definecolor{mosVer}{HTML}{1B6B4C}\definecolor{mosVerBg}{HTML}{E4F1EA}
\definecolor{mosAdv}{HTML}{9B2226}\definecolor{mosAdvBg}{HTML}{FAEAEA}
\definecolor{mosNeu}{HTML}{55585C}\definecolor{mosNeuBg}{HTML}{F1F2F3}
\definecolor{mosAcc}{HTML}{A96A00}\definecolor{mosAccBg}{HTML}{FBF0DC}
\tikzset{
  mosfig/.style={>={Latex[length=3.4pt,width=2.8pt]}, font=\scriptsize, line width=0.45pt,
                 execute at begin picture={\hyphenpenalty=10000\exhyphenpenalty=10000\relax}},
  card/.style 2 args={draw=#1!45, fill=#1!7, rounded corners=2.2pt, align=center, inner sep=3.6pt,
                      text width=#2, minimum height=7mm},
  plain/.style={draw=mosNeu!45, fill=mosNeu!6, rounded corners=2.2pt, align=center, inner sep=3.6pt},
  ghost/.style={draw=mosNeu!45, fill=white, densely dashed, text=mosNeu!85, rounded corners=2.2pt,
                align=center, inner sep=3.6pt},
  flow/.style   ={->, draw=mosNeu!85, line width=0.65pt},
  keyflow/.style={->, draw=mosKey, line width=0.65pt, densely dashed},
  advflow/.style={->, draw=mosAdv, line width=0.6pt, dotted},
  lab/.style ={font=\tiny, inner sep=1.4pt, text=mosNeu},
  klab/.style={lab, text=mosKey},
  alab/.style={lab, text=mosAdv},
  elab/.style={lab, fill=white, inner sep=1.6pt, rounded corners=1pt},
  zone/.style={rounded corners=3pt, line width=0.4pt, inner sep=4.4pt},
  scope/.style={zone, draw=mosKey!40, densely dashed},
  chan/.style ={zone, draw=mosNeu!35, densely dotted, fill=mosNeu!4},
}
\newcommand{\cardn}[7]{%
  \node[draw=#2!55, shade, top color=white, bottom color=#2!13, line width=0.5pt,
        rounded corners=2.4pt, align=center,
        inner sep=3pt, text width=#3, minimum height=#5,
        drop shadow={opacity=0.10, shadow xshift=0.45pt, shadow yshift=-0.45pt}] (#1) at #4
       {\rule{0pt}{3.9mm}\\[-1.3mm]#7};%
  \begin{scope}
    \clip[rounded corners=2.4pt] (#1.south west) rectangle (#1.north east);
    \shade[top color=#2!92, bottom color=#2] (#1.north west) rectangle ([yshift=-4.1mm]#1.north east);
  \end{scope}
  \node[anchor=north, inner sep=0pt, yshift=-1.15mm, text=white, font=\scriptsize\bfseries]
       at (#1.north) {#6};%
}
\tikzset{
  keyicon/.pic={\draw[line width=0.4pt, fill=mosKeyBg] (0,0) circle (1.15pt);
    \draw[line width=0.5pt] (1.15pt,0) -- (4.1pt,0);
    \draw[line width=0.5pt] (2.9pt,0) -- (2.9pt,-1.35pt);
    \draw[line width=0.5pt] (3.9pt,0) -- (3.9pt,-1.35pt);},
  person/.pic={\draw[line width=0.5pt] (0,2.1pt) circle (1.5pt);
    \draw[line width=0.5pt] (-2.2pt,-2.3pt) .. controls (-2.2pt,0.7pt) and (2.2pt,0.7pt) .. (2.2pt,-2.3pt);},
  eye/.pic={\draw[line width=0.5pt] (-3.1pt,0) .. controls (-1.5pt,2.3pt) and (1.5pt,2.3pt) .. (3.1pt,0)
      .. controls (1.5pt,-2.3pt) and (-1.5pt,-2.3pt) .. (-3.1pt,0);
    \draw[line width=0.5pt, fill] (0,0) circle (0.8pt);},
  cam/.pic={\draw[line width=0.5pt, rounded corners=0.6pt] (-3pt,-2.1pt) rectangle (3pt,1.7pt);
    \draw[line width=0.5pt] (-1.2pt,1.7pt) -- (-0.7pt,2.7pt) -- (0.7pt,2.7pt) -- (1.2pt,1.7pt);
    \draw[line width=0.5pt] (0,-0.2pt) circle (1.2pt);},
  skelA/.pic={\draw[line width=0.42pt] (0,7.4pt) circle (1.7pt);
    \draw[line width=0.42pt] (0,5.7pt) -- (0,0.6pt);
    \draw[line width=0.42pt] (-2.9pt,2.0pt) -- (0,4.4pt) -- (2.7pt,2.9pt);
    \draw[line width=0.42pt] (-2.4pt,-3.8pt) -- (0,0.6pt) -- (2.2pt,-3.8pt);},
  skelB/.pic={\draw[line width=0.42pt] (0,7.4pt) circle (1.7pt);
    \draw[line width=0.42pt] (0,5.7pt) -- (0.4pt,0.6pt);
    \draw[line width=0.42pt] (-2.5pt,5.6pt) -- (0,4.4pt) -- (2.6pt,5.4pt);
    \draw[line width=0.42pt] (-1.2pt,-3.8pt) -- (0.4pt,0.6pt) -- (2.6pt,-3.6pt);},
  skelC/.pic={\draw[line width=0.42pt] (0,7.4pt) circle (1.7pt);
    \draw[line width=0.42pt] (0,5.7pt) -- (-0.4pt,0.6pt);
    \draw[line width=0.42pt] (-2.7pt,3.4pt) -- (0,4.4pt) -- (2.9pt,1.9pt);
    \draw[line width=0.42pt] (-2.8pt,-3.5pt) -- (-0.4pt,0.6pt) -- (1.4pt,-3.9pt);},
  rep3d/.pic={
    \fill[mosNeu!16] (0,-0.33) ellipse (0.17 and 0.037);
    \draw[line width=0.5pt, mosNeu!85, line cap=round]
      (0,0.20) -- (0,0.00) (-0.14,0.05) -- (0,0.16) -- (0.13,0.07)
      (-0.11,-0.29) -- (0,0.00) -- (0.11,-0.29);
    \draw[line width=0.5pt, mosNeu!85] (0,0.255) circle (0.05);},
  reppix/.pic={
    \draw[line width=1.95pt, mosAcc!70, line cap=round] (0,0.20) -- (0,0.00);
    \draw[line width=1.25pt, mosAcc!70, line cap=round] (-0.14,0.05) -- (0,0.16) -- (0.13,0.07);
    \draw[line width=1.45pt, mosAcc!70, line cap=round] (-0.11,-0.29) -- (0,0.00) -- (0.11,-0.29);
    \fill[mosAcc!70] (0,0.255) circle (0.062);},
  rep2d/.pic={
    \draw[line width=1.95pt, mosAcc!20, line cap=round] (0,0.20) -- (0,0.00);
    \draw[line width=1.25pt, mosAcc!20, line cap=round] (-0.14,0.05) -- (0,0.16) -- (0.13,0.07);
    \draw[line width=1.45pt, mosAcc!20, line cap=round] (-0.11,-0.29) -- (0,0.00) -- (0.11,-0.29);
    \fill[mosAcc!20] (0,0.255) circle (0.062);
    \draw[mosKey!55, line width=0.3pt]
      (0,0.255) -- (0,0.16) -- (0,0.00) (-0.14,0.05) -- (0,0.16) -- (0.13,0.07)
      (-0.11,-0.29) -- (0,0.00) -- (0.11,-0.29);
    \foreach \pp in {(0,0.255),(0,0.16),(0,0.00),(-0.14,0.05),(0.13,0.07),(-0.11,-0.29),(0.11,-0.29)}{%
      \fill[mosKey] \pp circle (0.027);}},
  rep3dr/.pic={
    \draw[line width=0.5pt, mosNeu!30, line cap=round]
      (0,0.20) -- (0,0.00) (-0.14,0.05) -- (0,0.16) -- (0.13,0.07)
      (-0.11,-0.29) -- (0,0.00) -- (0.11,-0.29);
    \draw[line width=0.5pt, mosNeu!30] (0,0.255) circle (0.05);
    \fill[mosNeu!16] (0.03,-0.33) ellipse (0.17 and 0.037);
    \draw[line width=0.5pt, mosKey, line cap=round]
      (0.03,0.20) -- (0.02,0.00) (-0.12,0.03) -- (0.03,0.16) -- (0.15,0.09)
      (-0.09,-0.29) -- (0.02,0.00) -- (0.13,-0.28);
    \draw[line width=0.5pt, mosKey] (0.03,0.255) circle (0.05);},
  skel/.pic={
    \draw[line width=0.42pt] (0,7.4pt) circle (1.7pt);
    \draw[line width=0.42pt] (0,5.7pt) -- (0,0.6pt);
    \draw[line width=0.42pt] (-2.9pt,2.0pt) -- (0,4.4pt) -- (2.7pt,2.9pt);
    \draw[line width=0.42pt] (-2.4pt,-3.8pt) -- (0,0.6pt) -- (2.2pt,-3.8pt);},
  link/.pic={
    \draw[line width=0.45pt] (-0.6pt,0) arc (0:180:1.5pt and 2pt);
    \draw[line width=0.45pt] (0.6pt,0) arc (180:360:1.5pt and 2pt);},
  forge/.pic={
    \draw[line width=0.45pt] (-3pt,-1pt) .. controls (-1pt,2.4pt) and (0pt,-2.4pt) .. (3pt,1pt);},
  anon/.pic={\fill (0,1.95pt) circle (1.35pt);
    \fill (-2.6pt,-2.4pt) .. controls (-2.6pt,0.7pt) and (-1.35pt,1.15pt) .. (0,1.15pt)
          .. controls (1.35pt,1.15pt) and (2.6pt,0.7pt) .. (2.6pt,-2.4pt) -- cycle;},
  check/.pic={\draw[line width=0.75pt, line cap=round, line join=round]
      (-2.5pt,0.3pt) -- (-0.8pt,-1.5pt) -- (2.7pt,2.2pt);},
  screen/.pic={\draw[line width=0.45pt, rounded corners=0.6pt] (-3.2pt,-1.1pt) rectangle (3.2pt,2.7pt);
    \draw[line width=0.45pt] (0,-1.1pt) -- (0,-2.4pt);
    \draw[line width=0.45pt] (-1.7pt,-2.4pt) -- (1.7pt,-2.4pt);},
}
\newcommand{\bits}[7]{%
  \begin{scope}[shift={(#2,#3)}]
    \foreach \b [count=\i from 0] in {#7}{%
      \pgfmathsetmacro{\tint}{\b > 0.5 ? 82 : 14}%
      \fill[#1!\tint] (\i*#4,0) rectangle (\i*#4+#4,#5);}%
    \draw[#1!60, line width=0.45pt] (0,0) rectangle (#6*#4,#5);
  \end{scope}}

\tikzset{poseA/.pic={%
  \draw[line width=0.42pt, line cap=round, line join=round] (0.00pt,0.00pt) -- (-0.40pt,-0.64pt) -- (-0.81pt,-3.31pt) -- (-0.74pt,-6.24pt) -- (-1.20pt,-6.53pt);
  \draw[line width=0.42pt, line cap=round, line join=round] (0.00pt,0.00pt) -- (0.41pt,-0.58pt) -- (0.84pt,-3.28pt) -- (0.78pt,-6.29pt) -- (0.96pt,-6.60pt);
  \draw[line width=0.42pt, line cap=round, line join=round] (0.00pt,0.00pt) -- (0.03pt,0.88pt) -- (0.05pt,1.86pt) -- (0.03pt,2.25pt) -- (-0.09pt,3.77pt) -- (0.03pt,4.35pt);
  \draw[line width=0.42pt, line cap=round, line join=round] (0.03pt,2.25pt) -- (-0.53pt,3.07pt) -- (-1.39pt,3.11pt) -- (-1.85pt,1.42pt) -- (-2.54pt,-0.12pt);
  \draw[line width=0.42pt, line cap=round, line join=round] (0.03pt,2.25pt) -- (0.49pt,3.09pt) -- (1.40pt,3.16pt) -- (1.80pt,1.47pt) -- (2.18pt,-0.17pt);
  \draw[line width=0.42pt] (0.03pt,4.35pt) circle (0.94pt);}}
\tikzset{poseB/.pic={%
  \draw[line width=0.42pt, line cap=round, line join=round] (0.00pt,0.00pt) -- (-0.40pt,-0.64pt) -- (-0.82pt,-3.31pt) -- (-0.76pt,-6.25pt) -- (-1.23pt,-6.53pt);
  \draw[line width=0.42pt, line cap=round, line join=round] (0.00pt,0.00pt) -- (0.41pt,-0.58pt) -- (0.84pt,-3.28pt) -- (0.78pt,-6.29pt) -- (0.95pt,-6.61pt);
  \draw[line width=0.42pt, line cap=round, line join=round] (0.00pt,0.00pt) -- (0.05pt,0.88pt) -- (0.08pt,1.86pt) -- (0.04pt,2.26pt) -- (-0.06pt,3.77pt) -- (-0.02pt,4.39pt);
  \draw[line width=0.42pt, line cap=round, line join=round] (0.04pt,2.26pt) -- (-0.51pt,3.07pt) -- (-1.36pt,3.14pt) -- (-2.29pt,1.56pt) -- (-2.74pt,2.16pt);
  \draw[line width=0.42pt, line cap=round, line join=round] (0.04pt,2.26pt) -- (0.51pt,3.09pt) -- (1.43pt,3.14pt) -- (1.81pt,1.44pt) -- (2.09pt,-0.21pt);
  \draw[line width=0.42pt] (-0.02pt,4.39pt) circle (0.94pt);}}
\tikzset{poseC/.pic={%
  \draw[line width=0.42pt, line cap=round, line join=round] (0.00pt,0.00pt) -- (-0.40pt,-0.64pt) -- (-0.81pt,-3.31pt) -- (-0.77pt,-6.25pt) -- (-1.24pt,-6.54pt);
  \draw[line width=0.42pt, line cap=round, line join=round] (0.00pt,0.00pt) -- (0.41pt,-0.58pt) -- (0.84pt,-3.28pt) -- (0.79pt,-6.29pt) -- (0.95pt,-6.62pt);
  \draw[line width=0.42pt, line cap=round, line join=round] (0.00pt,0.00pt) -- (0.03pt,0.88pt) -- (0.06pt,1.87pt) -- (0.03pt,2.26pt) -- (-0.10pt,3.77pt) -- (-0.06pt,4.41pt);
  \draw[line width=0.42pt, line cap=round, line join=round] (0.03pt,2.26pt) -- (-0.54pt,3.06pt) -- (-1.39pt,3.13pt) -- (-2.41pt,1.91pt) -- (-3.84pt,2.78pt);
  \draw[line width=0.42pt, line cap=round, line join=round] (0.03pt,2.26pt) -- (0.48pt,3.09pt) -- (1.39pt,3.16pt) -- (1.85pt,1.46pt) -- (2.17pt,-0.17pt);
  \draw[line width=0.42pt] (-0.06pt,4.41pt) circle (0.94pt);}}
\tikzset{poseD/.pic={%
  \draw[line width=0.42pt, line cap=round, line join=round] (0.00pt,0.00pt) -- (-0.40pt,-0.64pt) -- (-0.80pt,-3.31pt) -- (-0.74pt,-6.24pt) -- (-1.20pt,-6.53pt);
  \draw[line width=0.42pt, line cap=round, line join=round] (0.00pt,0.00pt) -- (0.41pt,-0.58pt) -- (0.85pt,-3.28pt) -- (0.80pt,-6.28pt) -- (0.98pt,-6.61pt);
  \draw[line width=0.42pt, line cap=round, line join=round] (0.00pt,0.00pt) -- (0.04pt,0.88pt) -- (0.06pt,1.86pt) -- (0.03pt,2.26pt) -- (-0.06pt,3.77pt) -- (0.01pt,4.39pt);
  \draw[line width=0.42pt, line cap=round, line join=round] (0.03pt,2.26pt) -- (-0.51pt,3.07pt) -- (-1.36pt,3.14pt) -- (-1.99pt,1.51pt) -- (-2.40pt,0.12pt);
  \draw[line width=0.42pt, line cap=round, line join=round] (0.03pt,2.26pt) -- (0.51pt,3.08pt) -- (1.42pt,3.15pt) -- (1.82pt,1.45pt) -- (2.18pt,-0.21pt);
  \draw[line width=0.42pt] (0.01pt,4.39pt) circle (0.94pt);}}

\newcommand{\magbars}{%
  \fill[mosKey!22] (0.000,0) rectangle (0.061,1.300);
  \fill[mosKey!22] (0.061,0) rectangle (0.122,1.282);
  \fill[mosKey!22] (0.122,0) rectangle (0.182,1.238);
  \fill[mosKey!22] (0.182,0) rectangle (0.243,1.178);
  \fill[mosKey!22] (0.243,0) rectangle (0.304,1.095);
  \fill[mosKey!22] (0.304,0) rectangle (0.365,1.011);
  \fill[mosKey!22] (0.365,0) rectangle (0.425,0.913);
  \fill[mosKey!22] (0.425,0) rectangle (0.486,0.805);
  \fill[mosKey!22] (0.486,0) rectangle (0.547,0.704);
  \fill[mosKey!22] (0.547,0) rectangle (0.608,0.605);
  \fill[mosKey!22] (0.608,0) rectangle (0.668,0.508);
  \fill[mosKey!22] (0.668,0) rectangle (0.729,0.418);
  \fill[mosKey!22] (0.729,0) rectangle (0.790,0.345);
  \fill[mosKey!22] (0.790,0) rectangle (0.851,0.274);
  \fill[mosKey!22] (0.851,0) rectangle (0.912,0.220);
  \fill[mosKey!22] (0.912,0) rectangle (0.972,0.166);
  \fill[mosKey!22] (0.972,0) rectangle (1.033,0.130);
  \fill[mosKey!22] (1.033,0) rectangle (1.094,0.093);
  \fill[mosKey!22] (1.094,0) rectangle (1.155,0.071);
  \fill[mosKey!22] (1.155,0) rectangle (1.215,0.050);
  \fill[mosKey!22] (1.215,0) rectangle (1.276,0.037);
  \fill[mosKey!22] (1.276,0) rectangle (1.337,0.026);
  \fill[mosKey!22] (1.337,0) rectangle (1.398,0.017);
  \fill[mosKey!22] (1.398,0) rectangle (1.458,0.011);
  \fill[mosKey!22] (1.458,0) rectangle (1.519,0.007);
  \fill[mosKey!22] (1.519,0) rectangle (1.580,0.005);
}
\newcommand{\sgnbars}{%
  \fill[mosVer!22] (0.000,0) rectangle (0.056,0.005);
  \fill[mosVer!22] (0.056,0) rectangle (0.111,0.008);
  \fill[mosVer!22] (0.111,0) rectangle (0.167,0.013);
  \fill[mosVer!22] (0.167,0) rectangle (0.223,0.023);
  \fill[mosVer!22] (0.223,0) rectangle (0.278,0.034);
  \fill[mosVer!22] (0.278,0) rectangle (0.334,0.050);
  \fill[mosVer!22] (0.334,0) rectangle (0.390,0.076);
  \fill[mosVer!22] (0.390,0) rectangle (0.445,0.104);
  \fill[mosVer!22] (0.445,0) rectangle (0.501,0.152);
  \fill[mosVer!22] (0.501,0) rectangle (0.557,0.202);
  \fill[mosVer!22] (0.557,0) rectangle (0.612,0.270);
  \fill[mosVer!22] (0.612,0) rectangle (0.668,0.353);
  \fill[mosVer!22] (0.668,0) rectangle (0.724,0.441);
  \fill[mosVer!22] (0.724,0) rectangle (0.779,0.549);
  \fill[mosVer!22] (0.779,0) rectangle (0.835,0.662);
  \fill[mosVer!22] (0.835,0) rectangle (0.891,0.788);
  \fill[mosVer!22] (0.891,0) rectangle (0.946,0.918);
  \fill[mosVer!22] (0.946,0) rectangle (1.002,1.028);
  \fill[mosVer!22] (1.002,0) rectangle (1.058,1.120);
  \fill[mosVer!22] (1.058,0) rectangle (1.113,1.210);
  \fill[mosVer!22] (1.113,0) rectangle (1.169,1.271);
  \fill[mosVer!22] (1.169,0) rectangle (1.225,1.300);
  \fill[mosVer!22] (1.225,0) rectangle (1.280,1.298);
  \fill[mosVer!22] (1.280,0) rectangle (1.336,1.273);
  \fill[mosVer!22] (1.336,0) rectangle (1.391,1.211);
  \fill[mosVer!22] (1.391,0) rectangle (1.447,1.131);
  \fill[mosVer!22] (1.447,0) rectangle (1.503,1.021);
  \fill[mosVer!22] (1.503,0) rectangle (1.558,0.913);
  \fill[mosVer!22] (1.558,0) rectangle (1.614,0.787);
  \fill[mosVer!22] (1.614,0) rectangle (1.670,0.671);
  \fill[mosVer!22] (1.670,0) rectangle (1.725,0.556);
  \fill[mosVer!22] (1.725,0) rectangle (1.781,0.440);
  \fill[mosVer!22] (1.781,0) rectangle (1.837,0.349);
  \fill[mosVer!22] (1.837,0) rectangle (1.892,0.267);
  \fill[mosVer!22] (1.892,0) rectangle (1.948,0.204);
  \fill[mosVer!22] (1.948,0) rectangle (2.004,0.150);
  \fill[mosVer!22] (2.004,0) rectangle (2.059,0.106);
  \fill[mosVer!22] (2.059,0) rectangle (2.115,0.075);
  \fill[mosVer!22] (2.115,0) rectangle (2.171,0.050);
  \fill[mosVer!22] (2.171,0) rectangle (2.226,0.035);
  \fill[mosVer!22] (2.226,0) rectangle (2.282,0.022);
  \fill[mosVer!22] (2.282,0) rectangle (2.338,0.014);
  \fill[mosVer!22] (2.338,0) rectangle (2.393,0.009);
  \fill[mosVer!22] (2.393,0) rectangle (2.449,0.005);
}
\newcommand{\magcurve}{\draw[mosKey, line width=0.8pt] plot[domain=0:3.4, samples=50, smooth] ({\x/3.4*1.580}, {1.300*exp(-\x*\x/2)});}
\newcommand{\sgncurve}{\draw[mosVer, line width=0.8pt] plot[domain=-3.4:3.4, samples=90, smooth] ({(\x+3.4)/6.8*2.449}, {1.300*exp(-\x*\x/2)});}

\newcommand{\ic}[2]{\tikz[baseline=-0.55ex]{\pic[draw=#1, fill=#1]{#2}}\,}
\newcommand{\icw}[1]{\tikz[baseline=-0.55ex]{\pic[draw=white, fill=white, line width=0.5pt]{#1}}\,}

\newtheorem{definition}{Definition}
\newtheorem{proposition}{Proposition}

\newcommand{\R}{\mathbb{R}}
\newcommand{\Norm}{\mathcal{N}}
\newcommand{\zm}{\mathbf{z}_{m}}
\newcommand{\zp}{\mathbf{z}_{p}}
\newcommand{\xw}{x_{\mathrm{wm}}}
\newcommand{\enc}{\mathcal{E}}
\newcommand{\dec}{\mathcal{G}}
\newcommand{\extr}{\mathcal{X}_{\phi}}
\newcommand{\lift}{\Lambda}
\newcommand{\proj}{\pi}
\newcommand{\rec}{\mathcal{R}}
\newcommand{\Hmac}{\textsf{HMAC}}
\newcommand{\Ecc}{\textsf{ECC}}
\newcommand{\Prf}{\textsf{PRF}}
\newcommand{\mact}{e\,\|\,r\,\|\,\mathrm{ctx}}
\newcommand{\efa}{\varepsilon_{\mathrm{FA}}}
\newcommand{\key}{k}
\newcommand{\dq}{d}                      
\newcommand{\Nj}{N_J}                    
\newcommand{\nb}{n}                      
\newcommand{\kb}{\kappa}                 

\begin{document}
\date{}

\title{\Large \bf MoSign: Challenge-Response Motion-Watermark Authentication\\
for Anonymous Virtual-Reality Users}

\ifarxiv
\author{
{\rm Xujun Che, Thomas Carr, Depeng Xu, Aidong Lu}\\
University of North Carolina at Charlotte\\
\{xche,tcarr23,dxu7,alu1\}@charlotte.edu
\and
{\rm Shuhan Yuan}\\
Utah State University\\
shuhan.yuan@usu.edu
} 
\else
\author{
{\rm Anonymous Author(s)}\\
Anonymous Institution
} 
\fi

\maketitle

\begin{abstract}
Social virtual reality (VR) creates a paradox. A user's body motion is a high-entropy biometric: head and hand trajectories alone re-identify users among tens of thousands with over $94\%$ accuracy, so \emph{anonymizing} the rendered avatar is a practical necessity. Yet a user often still wants to \emph{prove} their identity to a chosen party from inside that anonymity. We present \textbf{MoSign}, which recasts digital watermarking as a \emph{challenge-response authentication protocol on the motion channel}. MoSign embeds a time-varying keyed message into the style latent of a motion variational autoencoder via \emph{keystream-whitened Gaussian-Shading}: watermarked motion is \emph{provably indistinguishable} from watermark-free motion, since any detector's advantage reduces to breaking a pseudorandom function, so the mark composes with anonymization. The message is a keyed MAC over an epoch counter, a session nonce, and a deployment context, making MoSign \emph{replay-resistant} and bounding forgery by the verifier's measured false-accept rate times the adversary's online query budget. A key-holding verifier decides with a sequential test. We identify render$\rightarrow$record$\rightarrow$re-estimate (``recapture'') as the realistic VR attack surface: a \emph{generic} pose estimator strips the necessarily subtle watermark, but a \emph{recapture-robust keyed reader} recovers it (up to $0.96$ codeword accuracy on a projected-2D channel, $0.81$ through a full render-to-video loop), while without the key recovery stays at chance. On HumanML3D, MoSign authenticates every legitimate user at a false-accept rate of $10^{-4}$ on clean and most channels and stays undetectable (detection AUC $0.51$, chance $0.5$); on the BOXRR-23 VR dataset it carries the mark through a real anonymizer at $0.99$ codeword accuracy and adds no de-anonymization side channel.
\end{abstract}

\section{Introduction}
Immersive social VR places \emph{anonymous} avatars in shared spaces, where two
requirements collide.
\emph{First, motion is identity.} Large-scale measurement shows that
head-and-hand telemetry uniquely identifies more than $50{,}000$ users, reaching
$94.33\%$ accuracy from $100$ seconds and $73.20\%$ from only $10$ seconds of
motion~\cite{nair2023unique}; the same signal leaks over forty private attributes
spanning demographics, behaviour, and health~\cite{nair2023inferring} and is available at population scale in
public datasets~\cite{nair2024boxrr}; identification is reliable within one application,
though it generalizes across applications only weakly~\cite{schach2025crossxr}. Consequently, real-time motion
\emph{anonymizers} are now being built and deployed~\cite{nair2024deepmasking}.
\emph{Second, anonymity defeats accountability.} Once a user is anonymized, they
can no longer prove ``I am the same trusted agent you spoke to before'' to a
chosen counterparty, a private service, or a moderation system, without revealing
\emph{who} they are to every bystander. This is the missing primitive:
authentication \emph{within} anonymity.

Existing watermarking cannot fill this role. Generative watermarks for images,
video, and 3D assets~\cite{fernandez2023stablesignature,wen2023treering,
yang2024gaussianshading,jang2024lvmark,li2025gaussianseal} are built for
\emph{provenance}: they embed a \emph{static} payload and assume an adversary who
edits the carrier, not one who \emph{records and replays} it. A static watermark
on a moving body is, in effect, a password the body continuously broadcasts: any
bystander can capture it and impersonate its owner. Moreover, no systematic
in-generation watermark exists for the skeletal-motion modality at all. Classical
motion-capture watermarking~\cite{agarwal2007tamper,li2007progressive,
motwani2008skinning,du2012maxima} is post-hoc, hand-crafted, decoupled from any
generator, and aimed at copyright or at detecting that a clip has been altered, not at
proving that a live party holds a secret.

Nor do the established ways of proving something without revealing who you are. Anonymous
credentials~\cite{chaum1985security,camenisch2001credentials} authenticate the holder of a credential
when a session is established, but that binds the handshake rather than what subsequently travels over
the channel, so it cannot say whether the motion arriving now is the motion the credential holder
produced. VR behavioural biometrics~\cite{pfeuffer2019behavioural,miller2020vrauth} authenticate from
motion directly, by recognizing the person, which is exactly the capability anonymization removes.
Neither carries the proof \emph{in the stream itself}; Appendix~\ref{sec:related} treats these
lines of work in full.

MoSign closes this gap by treating the watermark as a \emph{protocol}, not a
label. The prover, the anonymous user proving who they are, embeds a time-varying
keyed message into the style latent of a generated motion in a way that keeps the
motion statistically identical to ordinary watermark-free motion; the verifier,
the single party that user chose to share a key with, reads the message and
authenticates with a sequential hypothesis test. Crucially, we take the
\emph{realistic} VR observation model seriously: a third party sees only the
\emph{rendered} avatar and must re-estimate motion from video, the ``analog
hole'' of VR. We show this channel, not benign signal processing, is the true
adversary, and we give a clean, security-grounded answer for it.

\begin{figure*}[tp]\centering
\begin{tikzpicture}[mosfig, font=\scriptsize]
  \begin{scope}[on background layer]
    \fill[mosNeu!8, rounded corners=3pt] (4.05,-1.15) rectangle (11.85,1.15);
    \fill[mosAcc!7, rounded corners=3pt] (7.95,-1.15) rectangle (11.85,1.15);
  \end{scope}
  \node[lab, anchor=north west] at (4.18,1.08) {public motion channel, controlled by the adversary};

  \cardn{P}{mosKey}{29mm}{(1.05,0)}{15mm}{\icw{keyicon}Prover, holds $\key$}{writes $\Hmac_{\key}(\mact)$\\into the style latent}
  \cardn{A}{mosNeu}{25mm}{(5.75,0)}{15mm}{\icw{screen}Public VR}{anonymized avatar,\\identity masked}
  \cardn{R}{mosAcc}{25mm}{(9.95,0)}{15mm}{\icw{cam}Recapture}{render, record,\\re-estimate pose}
  \cardn{V}{mosVer}{29mm}{(14.35,0)}{15mm}{\icw{keyicon}Verifier, holds $\key$}{keyed reader, SPRT,\\\textbf{accept}}

  \bits{mosKey}{0.19}{-1.18}{0.0955}{0.24}{18}{1,0,1,1,0,1,0,0,1,0,1,1,0,1,0,1,1,0}
  \node[klab, anchor=north] at (1.05,-1.30) {the codeword written under $\key$};
  \bits{mosVer}{13.49}{-1.18}{0.0955}{0.24}{18}{1,0,1,1,0,1,0,0,1,0,1,1,0,1,0,1,1,0}
  \node[lab, anchor=north, text=mosVer] at (14.35,-1.30) {the same codeword, recovered};

  \draw[flow, line width=0.9pt] (P) -- (A);
  \foreach \d/\k in {0/poseA, 0.32/poseB, 0.64/poseC}{\pic[draw=mosKey!85] at (3.02+\d,0.42) {\k};}
  \node[elab, anchor=north, font=\tiny] at (3.34,-0.10) {one $\tau$ epoch};
  \draw[flow, line width=0.9pt] (A) -- (R);
  \draw[flow, line width=0.9pt] (R) -- node[elab, above]{recovered} node[elab, below]{motion} (V);

  \draw[keyflow, line width=0.9pt] (V.north) -- ++(0,7mm) -| (P.north);
  \node[klab, fill=white, inner sep=1.8pt, rounded corners=1pt] at (7.7,1.78)
       {challenge $(e,r)$: a fresh nonce and the epoch counter, renewed every $\tau$};

  \draw[advflow, line width=0.8pt] (7.9,-1.15) -- (7.9,-1.62);
  \begin{scope}[on background layer]
    \fill[mosAdv!5, rounded corners=3pt] (-0.45,-4.28) rectangle (15.80,-1.62);
  \end{scope}
  \node[draw=mosAdv, fill=mosAdv, text=white, font=\tiny\bfseries, rounded corners=2pt,
        inner sep=2.8pt, anchor=north west] at (-0.30,-1.72)
       {\icw{eye}The same band, to a party that does not hold $\key$};
  \draw[draw=mosAdv!25, line width=0.4pt] (5.55,-2.52) -- (5.55,-4.12);
  \draw[draw=mosAdv!25, line width=0.4pt] (10.05,-2.52) -- (10.05,-4.12);
  \foreach \x/\h in {0.20/{it holds}, 5.85/{its best detector reaches},
                     10.30/{so it cannot}}{
    \node[font=\tiny\bfseries, text=mosAdv, anchor=west] at (\x,-2.30) {\h};}

  \foreach \x/\lab in {1.45/{signed}, 4.05/{unsigned}}{
    \node[draw=mosNeu!35, fill=white, rounded corners=1.8pt,
          minimum width=21mm, minimum height=5.6mm] at (\x,-2.98) {};
    \foreach \d/\k in {-0.585/poseA, -0.195/poseB, 0.195/poseC, 0.585/poseD}{
      \pic[draw=mosNeu!88] at (\x+\d,-2.98) {\k};}
    \node[font=\tiny, text=mosAdv] at (\x,-3.44) {\lab};}
  \node[font=\scriptsize, text=mosNeu] at (2.75,-2.98) {$\equiv$};
  \node[font=\tiny, text=mosAdv, align=left, text width=50mm, anchor=north west] at (0.20,-3.60)
       {signed and unsigned motion are identically distributed, so no test separates them};

  \begin{scope}[shift={(5.90,-3.06)}]
    \def\sx{6.545}                       
    \fill[mosAdv!7, rounded corners=1.2pt] ({(0.49-0.45)*\sx},-0.13) rectangle ({(0.54-0.45)*\sx},0.13);
    \draw[draw=mosNeu!55, line width=0.5pt] (0,0) -- ({(1.0-0.45)*\sx},0);
    \foreach \v in {0.5,0.75,1.0}{
      \draw[draw=mosNeu!55, line width=0.5pt] ({(\v-0.45)*\sx},0) -- ({(\v-0.45)*\sx},-0.09);
      \node[font=\tiny, text=mosNeu, anchor=north] at ({(\v-0.45)*\sx},-0.11) {$\v$};}
    \draw[draw=mosNeu!70, densely dashed, line width=0.5pt]
      ({(0.5-0.45)*\sx},0.02) -- ({(0.5-0.45)*\sx},0.44);
    \draw[draw=mosAdv, line width=0.9pt, line cap=round]
      ({(0.49-0.45)*\sx},0) -- ({(0.54-0.45)*\sx},0);
    \foreach \v in {0.49,0.54}{
      \draw[draw=mosAdv, line width=0.9pt] ({(\v-0.45)*\sx},-0.09) -- ({(\v-0.45)*\sx},0.09);}
    \fill[mosAdv] ({(0.51-0.45)*\sx},0) circle (1.7pt);
    \node[font=\tiny\bfseries, text=mosAdv, anchor=west] at ({(0.545-0.45)*\sx},0.30)
         {AUC $0.51$ \textnormal{(95\%)}};
    \node[font=\tiny, text=mosNeu!85, anchor=north] at ({(0.5-0.45)*\sx},-0.40) {chance};
    \node[font=\tiny, text=mosNeu!85, anchor=north east] at ({(1.0-0.45)*\sx},-0.40) {perfect};
  \end{scope}

  \begin{scope}[shift={(10.30,-2.84)}]
    \foreach \i/\t/\c in {0/{tell it is signed}/{same law as the prior},
                          1/{link two sessions}/{renewed each epoch},
                          2/{forge a response}/{needs $\key$}}{
      \node[draw=mosAdv!45, fill=white, rounded corners=2pt, inner sep=0pt,
            minimum width=52mm, minimum height=5.2mm, anchor=west] at (0,-\i*0.56) {};
      \draw[draw=mosAdv!85, line width=0.9pt, line cap=round]
        (0.17,-\i*0.56-0.115) -- (0.40,-\i*0.56+0.115)
        (0.17,-\i*0.56+0.115) -- (0.40,-\i*0.56-0.115);
      \node[font=\tiny, text=mosAdv, anchor=west] at (0.55,-\i*0.56) {\t};
      \node[font=\tiny, text=mosAdv!70, anchor=east] at (5.10,-\i*0.56) {\c};}
  \end{scope}
\end{tikzpicture}
\caption{MoSign as a challenge-response protocol carried on the motion channel. The verifier issues a fresh
challenge, an epoch counter $e$ and a session nonce $r$; the prover answers not with a message but
with one epoch of motion that carries
$\Hmac_{\key}(\mact)$ in its style latent. The response survives the public channel, including a
render-record-re-estimate recapture, and is read only by a party holding $\key$; to everyone else it is
indistinguishable from unmarked anonymized motion. Key glyphs mark the two parties that hold $\key$,
and the two pose sequences are the same motion drawn twice.}
\label{fig:overview}
\end{figure*}
\smallskip\noindent\textbf{Contributions.}
\begin{itemize}\itemsep3pt
\item[\textbf{C1}] \emph{The first in-generation watermark for skeletal motion,
and, to our knowledge, the first used for \emph{entity} authentication, proving that
a live party holds a secret, rather than for provenance or content integrity.}
Unlike post-hoc mocap watermarks~\cite{agarwal2007tamper,du2012maxima}, the mark is produced
\emph{during} generation and read back through the full VR pipeline, and it answers ``is this the
key holder?'' rather than ``where did this come from?''.
\item[\textbf{C2}] \emph{A challenge-response, time-varying watermark that makes
the scheme replay-resistant and forgery-resistant to a bounded-query adversary.} We embed
$\Hmac_{\key}(\text{epoch}\,\|\,\text{nonce}\,\|\,\text{ctx})$ rather than a static
key, turning a ``broadcast password'' into an authentication protocol on the
motion channel. To our knowledge this is the first watermark with these protocol
guarantees, and it directly answers the recording-and-replay adversary that
defeats static schemes.
\item[\textbf{C3}] \emph{Provable undetectability, hence composability with
anonymization.} Keystream-whitened
Gaussian-Shading makes the watermarked style latent identically distributed to a
fresh prior sample, so the area under the detector's ROC curve (AUC) is $0.5$ \emph{by construction} (for any
decoder and any discriminator), reducing to PRF security. The mark thus never
betrays that a user is ``signing'' and stacks on top of any anonymizer.
\item[\textbf{C4}] \emph{A recapture-centered threat model and a receiver-side
keyed reader that solves it.} We show a generic estimator
strips the (subtle, undetectable) watermark, expose this as an inherent
undetectability/robustness tension, and resolve it by exploiting that the verifier \emph{holds the key}: a recapture-robust keyed reader recovers the
watermark from recaptured video, with security reducing to the keyed whitening
(without the key, recovery is provably at chance).
\end{itemize}

\noindent Figure~\ref{fig:overview} overviews the protocol. The design and its security analysis follow;
the evaluation then substantiates each claim with measurements.


\section{Background and Threat Model}\label{sec:threat}
\noindent\textbf{Motion biometrics and anonymization.}
VR motion is a strong biometric at population scale~\cite{nair2023unique,
nair2024boxrr}, reliable within an application though generalizing across applications only
weakly~\cite{schach2025crossxr}, and is recoverable even from a \emph{screen recording} of an
avatar's gait, robustly to appearance changes~\cite{meng2024avatarhunter}. Anonymizers therefore perturb the
identity-bearing factors of motion~\cite{nair2024deepmasking}. MoSign
operates after, and orthogonally to, anonymization: it adds an authentication
signal that neither de-anonymizes the user nor is detectable as ``signing.'' The
party a user authenticates to and the parties it stays anonymous to are
\emph{distinct}: a keyed mark is by design linkable by the key holder, so
authentication is to a \emph{chosen, key-sharing} verifier, not the anonymous crowd
(Figure~\ref{fig:deploy}).

\smallskip\noindent\textbf{Adversary (Kerckhoffs).}
The adversary knows the entire system, including architectures, weights, and the
training procedure, but not the secret key $\key$. We require undetectability and
unforgeability to hold against this adversary. Our key-necessity experiments instantiate exactly this adversary: it is even allowed to
train its own readers from scratch. It may also \emph{interact} with the verifier,
submitting candidate responses and learning which are accepted; as for any
challenge-response protocol, we assume the verifier rate-limits this, and we measure how success grows with the query budget. Finally it
may be \emph{in line} on either the motion or the video channel, modifying what the
verifier receives rather than only observing it.

Two experiments go beyond this model to isolate where security comes from, giving
the adversary key material it would not otherwise have: the dewhitening keystream
without the reader's key conditioning, and the per-session key for a white-box removal attack.

\smallskip\noindent\textbf{Channels.}
Beyond benign processing (re-sampling, smoothing, sensor noise, retargeting,
cropping, joint dropout), the defining VR channel is \emph{recapture}: a deployable
verifier often does not receive the raw joint stream and observes only the
\emph{rendered} avatar, which it screen-records and runs through a monocular video
pose estimator to recover the motion~\cite{meng2024avatarhunter}. Authentication
must therefore survive this render-then-re-estimate loop for the key holder, while
a keyless party can neither detect nor forge the mark on either the raw or the
recaptured stream.

\smallskip\noindent\textbf{Security goals.}
(i) \emph{Undetectability}: a keyless party cannot tell watermarked from ordinary
anonymized motion. (ii) \emph{Unforgeability}: a keyless party cannot make a
verifier accept. (iii) \emph{Replay resistance}: a recorded session does not
authenticate outside the $\pm$slack freshness window, which is tunable toward zero.
(iv) \emph{Robustness}: authentication survives the channels above, including
recapture, for the key holder. (v) \emph{Integrity under modification}: an in-line
adversary can suppress authentication by distorting the motion, but cannot turn suppression into acceptance.

\smallskip\noindent\textbf{Nonce and epoch.}
The verifier issues a fresh nonce per session as its challenge, and the epoch is self-clocked from a shared
wall clock with a $\pm$slack-epoch tolerance; the codeword and whitening keystream both derive from
$\Hmac_{\key}(\text{epoch}\,\|\,\text{nonce}\,\|\,\text{ctx})$, so a response is valid only for its
$(\text{epoch},\text{nonce})$. Freshness of the verifier-issued nonce is what defeats cross-session replay; the $\pm$slack window is the only surface a relay can exploit, and it is bounded.
Nonce hygiene (one challenge nonce per session, never reused) is load-bearing: a reused nonce revives replay,
which we quantify.

\smallskip\noindent\textbf{Deployment: three cases.}
The verifier is a party the user \emph{chooses} to authenticate to and shares a key with, distinct from the
parties it stays anonymous to. Any observer of the avatar's motion falls into one of three cases
(Figure~\ref{fig:deploy}):
\begin{itemize}\itemsep2pt
\item[\textbf{1}] \emph{Keyless party} (other users, platform analytics, an adversary, or the game/platform
server itself, which receives the anonymized stream but no authentication key): sees the motion, raw or
recaptured, but holds no key. It cannot tell the user is signing, link the user across sessions through the mark,
or impersonate the user, so the user is anonymous and unforgeable to it (C3, C2).
\item[\textbf{2}] \emph{Key-holding verifier, raw stream} (a cross-application service reached through the
platform's SDK or a relay): reads the mark on a clean channel and authenticates the pseudonymous user (C1, C2).
\item[\textbf{3}] \emph{Key-holding verifier, video only} (a moderator, an event gate, or a peer who sees the
rendered avatar but not its telemetry): screen-records the avatar, re-estimates its motion, and reads the mark through recapture; the hard case MoSign uniquely handles (C4).
\end{itemize}
The same entity can occupy different cases: the platform server is the prototypical case-1 party, yet a user may
\emph{selectively} authenticate to that operator (moving it to case 2, for example to prove account ownership),
accepting linkage to that one party while staying anonymous to every other. The verifier is therefore never the
party the user hides from \emph{in the same act}: a key holder can link the keyed mark by design, so seeking
anonymity against the very service one authenticates to would be self-defeating.

Verification is symmetric, and deliberately so: the verifier holds $\key$ and can produce the same
transcript, so a response convinces that verifier and no third party. A transferable proof would let
the one verifier a user authenticates to prove that user's identity to everyone else, the very
linkage the design prevents, so the authentication is deniable by construction. MoSign establishes
membership in a shared-key relationship, not non-repudiable attribution.

\begin{figure*}[tp]\centering
\begin{tikzpicture}[mosfig, font=\scriptsize]
  \foreach \d/\k in {-0.42/poseA, -0.14/poseB, 0.14/poseC, 0.42/poseD}{
    \pic[draw=mosNeu!88] at (\d,0.10) {\k};}
  \node[font=\scriptsize, anchor=north] at (0,-0.16) {real motion};

  \node[ghost, text width=22mm, minimum height=9.5mm] (an) at (2.42,0)
       {\ic{mosNeu!55}{anon}anonymizer\\(upstream)};
  \cardn{ms}{mosKey}{28mm}{(5.81,0)}{11.5mm}{MoSign}{keyed watermark $(e,r)$}
  \node[plain, text width=26mm, minimum height=12mm] (str) at (9.40,0) {};
  \node[font=\scriptsize, anchor=north] at (9.40,0.52) {\ic{mosNeu}{screen}render avatar};
  \foreach \d/\k in {-0.48/poseA, -0.16/poseB, 0.16/poseC, 0.48/poseD}{
    \pic[draw=mosNeu!88] at (9.40+\d,-0.12) {\k};}

  \draw[flow, line width=0.85pt] (0.66,0) -- (an.west);
  \draw[flow, line width=0.85pt] (an)--(ms);
  \draw[flow, line width=0.85pt] (ms)--(str);

  \cardn{obs}{mosAdv}{46mm}{(14.24,1.32)}{11mm}{\icw{eye}\ding{202}\ any party without $\key$}{\hspace*{6mm}cannot detect, link, or forge}
  \cardn{vraw}{mosVer}{46mm}{(14.24,0)}{11mm}{\icw{keyicon}\ding{203}\ verifier holding $\key$, raw stream}{\hspace*{6mm}reads the codeword directly}
  \cardn{vvid}{mosAcc}{46mm}{(14.24,-1.32)}{11mm}{\icw{keyicon}\ding{204}\ verifier holding $\key$, video only}{\hspace*{6mm}recaptures, then decodes}

  \coordinate (j) at (11.22,0);
  \draw[draw=mosNeu!85, line width=0.85pt] (str.east) -- (j);
  \draw[draw=mosNeu!85, line width=0.85pt] (11.22,1.32) -- (11.22,-1.32);
  \foreach \n/\y in {obs/1.32, vraw/0, vvid/-1.32}{
    \draw[flow, line width=0.85pt] (11.22,\y) -- (\n.west);}
  \fill[mosNeu] (j) circle (1.1pt);

  \foreach \d/\k/\o in {-0.13/poseB/60, 0.13/poseD/28}{\pic[draw=mosAdv!\o] at (12.30+\d,1.16) {\k};}
  \node[font=\tiny\bfseries, text=mosAdv] at (12.55,1.34) {?};
  \bits{mosVer}{11.99}{-0.25}{0.056}{0.20}{11}{1,0,1,1,0,1,0,0,1,0,1}
  \draw[mosAcc!65, line width=0.45pt, rounded corners=0.6pt] (12.00,-1.62) rectangle (12.60,-1.28);
  \foreach \d/\k/\o in {-0.13/poseC/70, 0.13/poseA/40}{\pic[draw=mosAcc!\o] at (12.30+\d,-1.46) {\k};}
\end{tikzpicture}
\caption{Deployment, and the three parties that see the same rendered avatar: \ding{202} keyless,
\ding{203} key holder with the raw stream, \ding{204} key holder with only the rendered video. A
separate upstream anonymizer masks identity; MoSign adds only the keyed watermark, so the two compose
without either being redesigned for the other.}
\label{fig:deploy}
\end{figure*}
\section{Design}\label{sec:design}
\subsection{Notation and the motion VAE}
Motion is a sequence $x\in\R^{T\times D}$ of $D$ per-frame features in a redundant,
root-relative parameterization, from which a differentiable map $\rec(\cdot)$ recovers the
$\Nj$ joint positions $J\in\R^{T\times\Nj\times 3}$. Nothing in this section depends on the
particular feature set, skeleton, or capture rate; the implementation section gives the ones we
instantiate. Windows are stated in seconds throughout and converted to frames at whatever rate
the capture runs at: the verifier's sliding window is $2$\,s and an epoch is $\tau{=}10$\,s. A transformer VAE
$(\enc,\dec)$ factorizes motion into a \emph{content} latent
$\zm\in\R^{(T/4)\times\dq}$ and a per-window \emph{style} latent $\zp\in\R^{\dq}$
($\dq{=}256$) with prior $\zp\sim\Norm(0,I_\dq)$; the decoder $\dec$ conditions on
$\zp$ through per-layer adaptive normalization. Thus $\zp$ modulates global style
and serves as the \emph{watermark carrier}, while $\zm$ carries the temporal
content.

\emph{An identity-neutral carrier.} The watermark overwrites only the style latent
$\zp$ with a keyed, user-independent draw, so the deployed $\zp$ introduces neither the
user's identity nor a privacy leak. Any residual identity remains in the content latent
$\zm$, which the watermark never alters; removing it is the job of an upstream
anonymizer, a separate problem, and one on which the composition adds no
de-anonymization side channel.

\subsection{Keyed embedding (prover)}\label{sec:embed}
For epoch $e$, session nonce $r$, key $\key$, and a deployment context string
$\mathrm{ctx}$ that binds the mark to one application and verifier, the prover forms a
time-varying mark in four steps (Fig.~\ref{fig:embed}):
\begin{align}
m_e &= \Hmac_{\key}(\mact)[:\kb] \in\{0,1\}^{\kb},\\
c   &= \Ecc(m_e)\in\{0,1\}^{\nb} \quad(\mathrm{BCH}(\nb,\kb,t)),\\
b   &= \mathrm{rep}(c)\oplus \Prf_{\key}(\mact)\in\{0,1\}^{\dq},\\
\zp[i] &= |\zeta_i|\,(2\,b_i-1),\quad \zeta\sim\Norm(0,I_\dq)\ \text{(keyed)},
\end{align}
Here $\Ecc$ is a Bose--Chaudhuri--Hocquenghem (BCH) error-correcting code, $\Prf$ a pseudorandom
function, and $\mathrm{rep}$ a repetition code; we instantiate $\nb{=}63$, $\kb{=}30$, and $t{=}6$ correctable errors. We write
$w=\Prf_{\key}(\mact)$ for the keystream and $s_i=2b_i-1\in\{\pm1\}$ for the embedded sign.
$\Hmac$ and $\Prf$ are not two independent objects sharing a key. Both are invocations of one
PRF under $\key$ on disjoint label prefixes, $\texttt{msg}\|\cdot$ for the message and
$\texttt{mask}\|\cdot$ for the keystream, so their outputs are the images of disjoint parts of a
single domain and the analysis below reduces to that one PRF rather than assuming independence
between two.

\emph{(1) Time-varying message.} The payload is not the key but a keyed MAC over
the current epoch and the session nonce. This single choice is what upgrades the
scheme from a static watermark to an authentication protocol: a recording made at
epoch $e$ carries $m_e$, which is invalid at any later epoch, so replay fails by construction.

\emph{(2) Error correction.} The motion channel has low bandwidth (about $22$
joints over a short window), so we protect the message with a $\mathrm{BCH}(63,30,t{=}6)$
code. The verifier compares in the codeword domain and need not decode, which both
amplifies robustness and lets us report an exact analytic null.

\emph{(3) Keystream whitening.} $\mathrm{rep}(\cdot)$ tiles the codeword to width
$\dq$ (about four replicas per bit, a natural repetition code) and XORs a keyed
PRF stream. To a keyless observer the embedded sign pattern $b$ is therefore
pseudorandom and i.i.d.; this is the property that makes the next step
distribution-preserving and, crucially, makes \emph{recovery} require the key.

\emph{(4) Gaussian-Shading.} Following~\cite{yang2024gaussianshading}, because the
magnitude $|\zeta_i|$ is independent of the sign, setting
$\zp[i]=|\zeta_i|(2b_i{-}1)$ leaves $\zp$ \emph{marginally exactly}
$\Norm(0,I_\dq)$. The message lives only in the signs; the magnitudes carry none.
The watermarked motion is $\xw=\dec(\zm,\zp)$.

\paragraph{Symmetry requirement.}
The construction is deliberately \emph{symmetric}: every operation applied to
watermarked motion is also applied to the watermark-free baseline, and there is
\emph{no} generation-only step. This is precisely what makes undetectability exact. Adding even one asymmetric, generation-only operation
re-introduces a detectable footprint, as we confirm by measurement: a bounded
amplitude-correction term $\Delta\theta(\zm,c)$ of magnitude $\delta$ added to
$\tilde\zp$, or an auxiliary objective that imprints the message into joint
positions. We ship $\delta{=}0$, so no such term is present, and pay the cost on the
receiver side with a stronger extractor and a soft-combining read-out.

\begin{figure*}[tp]\centering
\begin{tikzpicture}[mosfig]
  \tikzset{ln/.style={text width=25mm, minimum height=8.5mm},
           rn/.style={text width=47mm, minimum height=6.2mm}}
  \def\xR{4.966}   
  \def\xM{2.483}   
  \node[plain, ln, minimum height=13.5mm] (x) at (0,2.275) {};
  \node[font=\scriptsize, anchor=north] at (0,2.82) {motion $x$};
  \foreach \d/\k in {-0.90/poseA, -0.30/poseB, 0.30/poseC, 0.90/poseD}{\pic[draw=mosNeu!85] at (\d,2.02) {\k};}
  \node[plain, ln] (enc) at (0,0.74) {Encoder $\enc$\\[-0.5pt]{\tiny its own $\zp$ discarded}};
  \node[plain, ln, minimum height=11.5mm] (zm) at (0,-0.695) {};
  \node[lab, anchor=north] at (0,-0.25) {$\zm$ per frame};
  \foreach \r in {0,1,2}{\foreach \c/\t in {0/40,1/14,2/62,3/24,4/48,5/18,6/56,7/30,8/44,9/22,10/58,11/34}{
    \pgfmathsetmacro{\tt}{mod(\t+\r*19,58)+16}
    \fill[mosNeu!\tt] (-0.86+\c*0.145,-0.81-\r*0.135) rectangle (-0.735+\c*0.145,-0.695-\r*0.135);}}

  \node[card={mosKey}{47mm}, rn] (k) at (\xR, 2.64) {key $\key$,\ challenge $(e,r)$};
  \node[card={mosKey}{47mm}, rn] (h) at (\xR, 1.60) {$\Hmac_{\key}(\mact)\ \to\ \Ecc$};
  \node[card={mosKey}{47mm}, minimum height=11mm, text width=47mm] (w) at (\xR,0.32) {};
  \node[font=\scriptsize, text width=47mm, align=center, anchor=north, inner sep=0pt]
       at ([yshift=-1.6mm]w.north) {keystream whiten $\to$ signs $s\in\{\pm1\}^{\dq}$};
  \begin{scope}[shift={(w.south)}]
    \bits{mosKey}{-1.89}{0.16}{0.28}{0.26}{12}{1,0,1,1,0,1,0,0,1,0,1,1}
    \node[font=\tiny, text=mosKey!80] at (1.63,0.29) {$\cdots$};
  \end{scope}
  \node[card={mosKey}{47mm}, rn, line width=0.9pt, fill=mosKey!14] (z) at (\xR,-0.96)
       {$\tilde z=|\zeta|\odot s$,\ \ $\zeta\sim\Norm(0,I_\dq)$};

  \cardn{dec}{mosVer}{34mm}{(\xM,-2.24)}{9mm}{AdaLN Decoder $\dec$}{$\zm$ content,\ $\zp\!:=\!\tilde z$ style}
  \node[plain, text width=34mm] (out) at (\xM,-3.545)
       {watermarked anonymized motion};

  \begin{scope}[on background layer]
    \node[chan, fit=(x)(enc)(zm), inner sep=3pt] (LB) {};
    \node[scope, fit=(k)(z), inner sep=3pt] (RB) {};
  \end{scope}
  \node[lab,  above=0.4mm of LB.north] {content path};
  \node[klab, above=0.4mm of RB.north] {keyed carrier path};

  \draw[flow] (x) -- (enc);
  \draw[flow] (enc) -- (zm);
  \draw[keyflow] (k) -- (h); \draw[keyflow] (h) -- (w); \draw[keyflow] (w) -- (z);
  \draw[flow]    (zm.south) |- (dec.west);
  \draw[keyflow] (z.south)  |- (dec.east);
  \draw[flow] (dec) -- (out);

  \begin{scope}[shift={(9.35,0.60)}]
    \draw[mosNeu!45, line width=0.4pt] (-0.05,0) -- (1.72,0);
    \begin{scope}[yscale=1.25] \magbars \magcurve \end{scope}
    \draw[mosNeu!60, line width=0.5pt] (0,0) -- (0,-0.09);
    \node[font=\tiny, text=mosNeu, anchor=north] at (0,-0.10) {$0$};
    \node[font=\tiny, text=mosNeu, anchor=north] at (0.86,-0.34) {magnitude $|\zeta|$, folded};
    \draw[keyflow] (1.93,0.62) -- (2.85,0.62);
    \node[font=\scriptsize, text=mosKey, anchor=south] at (2.39,0.72) {$\odot\,s$};
    \node[font=\tiny, text=mosKey, anchor=north] at (2.39,0.52) {uniform sign};
    \begin{scope}[shift={(3.10,0)}]
      \draw[mosNeu!45, line width=0.4pt] (-0.05,0) -- (2.53,0);
      \begin{scope}[yscale=1.25] \sgnbars \sgncurve \end{scope}
      \draw[mosNeu!60, line width=0.5pt] (1.24,0) -- (1.24,-0.09);
      \node[font=\tiny, text=mosNeu, anchor=north] at (1.24,-0.10) {$0$};
      \node[font=\tiny, text=mosNeu, anchor=north] at (1.24,-0.34)
           {$\tilde z=|\zeta|\odot s$, unfolded to $\Norm(0,1)$};
    \end{scope}
  \end{scope}

  \begin{scope}[shift={(9.35,-2.22)}]
    \node[font=\tiny, text=mosNeu, anchor=east] at (-0.12,1.01) {$e$};
    \bits{mosKey}{0}{0.78}{0.33}{0.46}{18}{1,0,1,1,0,1,0,0,1,0,1,1,0,1,0,1,1,0}
    \node[font=\tiny, text=mosNeu, anchor=east] at (-0.12,0.23) {$e{+}1$};
    \bits{mosKey}{0}{0.00}{0.33}{0.46}{18}{0,1,1,0,0,1,1,0,1,1,0,0,1,0,1,1,0,1}
    \node[font=\tiny, text=mosNeu, anchor=east] at (-0.12,-0.55) {$e{+}2$};
    \bits{mosKey}{0}{-0.78}{0.33}{0.46}{18}{0,1,0,1,1,1,0,1,0,0,1,0,1,1,0,0,1,1}
    \node[font=\tiny, text=mosNeu, anchor=north west, align=left, text width=54mm] at (-0.12,-1.02)
         {a fresh $s$ every epoch, so pooling observations gains nothing};
  \end{scope}

  \begin{scope}[on background layer]
    \node[zone, draw=mosNeu!30, fill=mosNeu!3,
          fit={(8.62,-3.91) (15.75,2.99)}, inner sep=2pt] (PN) {};
  \end{scope}
  \node[lab, anchor=north west] at (8.78,2.98) {why the style latent gives nothing away};
  \node[klab, anchor=west] at (8.96,2.56) {(a) the law is preserved exactly};
  \node[klab, anchor=west] at (8.96,-0.52) {(b) and it is redrawn every epoch};
\end{tikzpicture}
\caption{Keyed, distribution-preserving embedding into the style latent. \emph{Left:} content and keyed
carrier travel separately and meet only at the decoder, so the message never enters $\zm$ and an
upstream anonymizer cannot interfere with it. \emph{Right:} the substitution is exact, because $|\zeta|$ is a folded
normal and the keyed sign vector is uniform and independent of it
(Proposition~\ref{prop:wmind}). The histograms are measured from the deployed codec: over $10^6$
written coordinates a Kolmogorov--Smirnov test against the prior gives $D{=}0.0019$ ($p{=}0.44$).
Poses throughout the figures are HumanML3D motion.}
\label{fig:embed}
\end{figure*}
\subsection{Reading and verification (verifier)}\label{sec:read}
A learned extractor $\extr:J\mapsto\boldsymbol{\ell}\in\R^{\dq}$ produces per-sign
logits from observed joints. It canonicalizes $J$ (removes root XZ translation and
normalizes by mean bone length), forms rich frame features (position, velocity,
acceleration, and bone directions), takes the low-frequency subband of a temporal discrete wavelet transform (DWT)
(where the watermark energy concentrates, following the temporal-robustness idea
of~\cite{jang2024lvmark}), and applies a transformer. The key holder then
\emph{dewhitens and combines}:
\begin{equation}
\hat{c}_j \;=\!\!\sum_{i\,:\,i\equiv j \;(\mathrm{mod}\ \nb)}\!\!
   \ell_i\,\bigl(1-2\,\Prf_{\key}(\mact)_i\bigr),\quad j=1,\dots,\nb,
\label{eq:comb}
\end{equation}
that is, it XORs the keystream back and soft-sums the roughly four replicas of each
codeword bit. Equation~\eqref{eq:comb} \emph{requires} the key, and it is where
the redundancy becomes coherent: without the keystream the replicas are flipped by
independent pseudorandom bits and cancel.

\paragraph{Sequential decision.}
Over a sliding window the verifier counts matches between $\mathrm{sign}(\hat c)$
and the expected codeword $c^\star=\Ecc(\Hmac_{\key}(\mact))$ against
the null ``no watermark'' (match count distributed as
$\mathrm{Binom}(\nb,\tfrac12)$) and the alternative that the window carries the mark at a
per-window codeword accuracy $p_1$, which we calibrate per channel. Following the hypothesis-testing view of
watermark detection~\cite{kirchenbauer2023watermark}, per-window log-likelihood
ratios (LLRs) feed Wald's sequential probability ratio test (SPRT): accept when the cumulative LLR exceeds
$\log\tfrac{1-\beta}{\alpha}$ and reject below $\log\tfrac{\beta}{1-\alpha}$, with
$\alpha{=}10^{-4}$ and $\beta{=}10^{-2}$, yielding a time-to-authenticate (TTA).
Each window is matched against epochs $\{e{-}1,e,e{+}1\}$ to tolerate
desynchronization, and per-epoch renewal with epoch/nonce binding provides replay
resistance. As in localized audio watermarking~\cite{sanroman2024audioseal}, this
per-window design lets the verifier authenticate as soon as enough evidence accrues, over $2$\,s
windows as the stream arrives. Generation is not streaming in the same sense: writing the message into
the latent requires encoding a whole epoch, so the prover emits nothing until the window is complete.

\section{Security Analysis}\label{sec:security}
Let $\mathrm{Gen}(S,\key,e,r)$ be watermarked motion for content $S$ at epoch $e$ under nonce $r$, and
$\mathrm{Base}(S)$ the watermark-free baseline: the same reconstruction with a fresh
$\zp\sim\Norm(0,I_\dq)$ in place of the keyed draw (identical $\zm$). The two differ only
in the style latent, so undetectability is between a signer and a non-signer; any upstream
anonymization acts on $\zm$ and is common to both.

\begin{definition}[WM-IND]
Following the undetectability notion in~\cite{christ2024undetectable}, a
scheme is \emph{watermark-indistinguishable} if for every probabilistic
polynomial-time distinguisher $\mathcal{D}$ and every $N$ polynomial in the security
parameter, the advantage on a bag of $N$ samples,
$\big|\Pr[\mathcal{D}(\mathrm{Gen}^{(1)},\dots,\mathrm{Gen}^{(N)})=1]-
\Pr[\mathcal{D}(\mathrm{Base}^{(1)},\dots,\mathrm{Base}^{(N)})=1]\big|$, is negligible;
the detection ROC AUC is then $\tfrac12$ up to that advantage. Quantifying over $N$ is
what separates a renewed payload from a fixed one, which is marginally Gaussian yet
distinguishable from $N{=}2$.
\end{definition}

\begin{proposition}[Undetectability by construction]\label{prop:wmind}
Under the keyed embedding above with an ideal pseudorandom function, the laws of
$\mathrm{Gen}(S,\key,e,r)$ and $\mathrm{Base}(S)$ are identical, and so are their
$N$-fold products for every $N$; hence AUC $=\tfrac12$. With a real keyed PRF, any WM-IND advantage is bounded by the PRF
distinguishing advantage.
\end{proposition}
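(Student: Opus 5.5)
The proof has two steps. First we show exact equality in the ideal-PRF world. Then we pay for replacing the ideal function by the real one with a standard hybrid argument.

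\emph{Ideal world, single sample.} Fix $S$; both $\mathrm{Gen}$ and $\mathrm{Base}$ apply the same deterministic (or identically randomized) map $\dec(\zm,\cdot)$ followed by the same downstream operations. This is where the symmetry requirement is used. So it suffices to show that the watermarked $\zp$ has law $\Norm(0,I_\dq)$. With an ideal PRF, the keystream $w=\Prf_{\key}(\texttt{mask}\|\mact)$ is uniform on $\{0,1\}^{\dq}$. It is also independent of $\mathrm{rep}(c)$: the codeword derives from the $\texttt{msg}\|\cdot$ prefix, a disjoint part of the domain, so a random function gives independent outputs there. Hence $b=\mathrm{rep}(c)\oplus w$ is uniform, by the one-time-pad argument conditional on any value of $c$, and so $s=2b-1$ is uniform on $\{\pm1\}^{\dq}$.

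Next, the magnitudes $|\zeta_i|$ are i.i.d.\ folded normals. They are independent of $s$: the keyed randomness for $\zeta$ is, in the ideal world, again an independent random-function output, or fresh randomness. For a symmetric law, $|\zeta_i|\cdot s_i$ with $s_i$ an independent Rademacher sign has the same law as $\zeta_i$. Therefore $\zp\sim\Norm(0,I_\dq)$, and $\mathrm{Gen}(S,\key,e,r)\overset{d}{=}\mathrm{Base}(S)$.

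\emph{Ideal world, $N$-fold products.} This is the step that needs care, because a fixed payload would be detectable at $N=2$. The $N$ samples in the bag correspond to distinct PRF inputs $(e,r,\mathrm{ctx})$, since nonces are fresh per session and epochs advance. A random function evaluated on distinct inputs yields jointly independent uniform outputs. So the sign vectors $s^{(1)},\dots,s^{(N)}$ are i.i.d.\ uniform, and independent of all magnitudes. Conditional on the contents $S^{(j)}$, the latents are therefore i.i.d.\ $\Norm(0,I_\dq)$, exactly as in the $N$ independent $\mathrm{Base}$ draws. The joint laws coincide, so every distinguisher, including every detector score, has advantage $0$. Equal score distributions under the two hypotheses give AUC $=\tfrac12$.

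I expect the main obstacle to be stating precisely the distinct-inputs hypothesis. Nonce hygiene is load-bearing here. With a reused $(e,r)$ the signs repeat and the product claim fails, so I would make "distinct challenges across the bag" an explicit hypothesis. I would also note that this matches the paper's remark that reuse revives replay.

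\emph{Real PRF.} Given a distinguisher $\mathcal{D}$ on $N$-bags with advantage $\epsilon$, build a PRF adversary $\mathcal{B}$ with oracle $O$. $\mathcal{B}$ samples the contents, queries $O$ on $\texttt{msg}\|\cdot$ and $\texttt{mask}\|\cdot$ (and on the $\zeta$ seed, if keyed) for each challenge, and runs the embedding to produce the $N$ samples. It then outputs $\mathcal{D}$'s verdict. If $O$ is the real PRF, $\mathcal{B}$ simulates $\mathrm{Gen}$ exactly. If $O$ is a random function, the ideal-world analysis shows the simulated bag has exactly the $\mathrm{Base}$ law. Thus $\mathrm{Adv}^{\mathrm{PRF}}(\mathcal{B})=\epsilon$ with polynomially many queries. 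The WM-IND advantage is bounded by the PRF advantage, and hence negligible, and the AUC deviates from $\tfrac12$ by at most that amount.
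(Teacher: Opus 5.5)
Your proof is correct and takes the same route as the paper's sketch. Pseudo-uniform signs independent of the folded-normal magnitudes make the written $\zp$ exactly $\Norm(0,I_\dq)$, the shared decoder over the shared $\zm$ carries this to the motion, renewal across the bag gives the $N$-fold claim, and a standard PRF reduction handles the real PRF. On the $N$-fold step you are sharper than the sketch, which credits message renewal: you correctly place the freshness in the keystream evaluated on distinct $(e,r,\mathrm{ctx})$ inputs (matching the paper's own replica-pair result, where a renewing message under a frozen keystream is detected), and you make nonce distinctness an explicit hypothesis.
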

\begin{proof}[Proof sketch]
With whitening, $b=\mathrm{rep}(c)\oplus\Prf_{\key}$ is pseudorandom, so the signs
$2b-1$ are pseudo-uniform and independent of the magnitudes $|\zeta|$; therefore
$\zp$ is distributed as $\Norm(0,I_\dq)$, identical to the baseline's $\zp$. Both
branches pass the \emph{same} decoder $\dec$ over the \emph{same} $\zm$, so the
output laws coincide and no distinguisher beats chance, for any decoder and any
discriminator, over the distribution the two branches induce. What a finite experiment
can show is bounded by its own sampling error; Section~\ref{sec:eval-wmind} measures an interval
that contains chance. Because the message renews with the epoch counter, this holds jointly across the
many samples a long-lived observer collects, not only per sample; a fixed payload
would be marginally Gaussian yet jointly distinguishable.
Replacing the ideal PRF by a keyed one yields the stated reduction.
The only way to break the equality is a generation-only asymmetric operation, which the construction excludes and which we show empirically restores a detectable footprint.
\end{proof}

\begin{proposition}[$(\efa,Q)$-authentication security]\label{prop:forge}
Let a keyless adversary hold no genuine response for any $(e',r)$ with $e'$ inside
the verifier's $\pm$slack window around $e$, let nonces never repeat, and let the adversary make at
most $Q$ verification attempts, choosing each from the accept/reject transcript of the ones before
it. Write $V(x){=}1$ for the verifier accepting candidate $x$, and
\[\efa=\sup\nolimits_{i,\,h_{i-1}}\Pr\bigl[V(x_i){=}1\mid h_{i-1}\bigr]\]
for the worst-case false-accept probability of a single attempt, over every transcript $h_{i-1}$ the
adversary can reach and every candidate it can form from one. Then its probability of producing
motion the verifier accepts for epoch $e$ is at most $Q\,\efa+\mathrm{Adv}^{\mathrm{prf}}$.
\end{proposition}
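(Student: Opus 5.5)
The plan is a standard hybrid argument followed by a union bound over the $Q$ adaptive attempts. Let $G_0$ be the real forgery experiment: the keyed $\Prf$ under $\key$ generates both the $\texttt{msg}$-prefixed MAC $m_e$ (hence $c^\star=\Ecc(m_e)$) and the $\texttt{mask}$-prefixed keystream $w$. Let $G_1$ be the same experiment with that single PRF replaced by a truly random function $F$ on the whole labelled domain. Since $\Hmac$ and $\Prf$ are invocations of one PRF on disjoint prefixes, one replacement suffices. I would build a reduction $\mathcal{B}$ with oracle access to either $\Prf_{\key}$ or $F$. It runs the adversary and answers each verification query $x_i$ by querying its oracle on $\texttt{msg}\|e'\|r\|\mathrm{ctx}$ and $\texttt{mask}\|e'\|r\|\mathrm{ctx}$ for $e'\in\{e{-}1,e,e{+}1\}$. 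It then applies the public extractor $\extr$ and the combiner \eqref{eq:comb} and runs the SPRT. It outputs $1$ iff some query is accepted. Because the extractor, codec and test are public (Kerckhoffs) and the verifier's only secret is $\key$, $\mathcal{B}$ simulates $G_b$ perfectly. Hence $|\Pr[G_0]-\Pr[G_1]|\le\mathrm{Adv}^{\mathrm{prf}}$.

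In $G_1$ I bound success via the events $A_i=\{V(x_i)=1\}$. Success is $\bigcup_{i\le Q}A_i$. Each $A_i$ can be split as the event that no earlier attempt was accepted and $V(x_i)=1$. Conditioning on the reachable transcript $h_{i-1}$, which before the first success is all-reject, gives $\Pr[A_i\mid h_{i-1}]\le\efa$ by the definition of $\efa$ as a supremum over transcripts and candidates. Summing over $i$, the union bound gives $\Pr[G_1]\le Q\efa$, and combining with the hybrid step yields $Q\efa+\mathrm{Adv}^{\mathrm{prf}}$.

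The main obstacle is to make sure that $\efa$, as defined in the statement, really upper-bounds the adversary's per-attempt success in $G_1$ and not merely the verifier's false-accept rate on unmarked motion. Three facts are needed here. First, $F$ is evaluated on $(e',r)$ points the adversary has never seen answered: the hypothesis excludes genuine responses in the slack window, and non-repeating nonces make $r$ fresh. Second, the adversary's other observations, namely marked motion for other $(e'',r'')$, are independent of $F$ at the fresh points. By Proposition~\ref{prop:wmind} those observations are in fact identically distributed to unmarked motion, so they give no information. Third, accept/reject bits leak information about $F$ at the fresh points only through the transcript, and the supremum over $h_{i-1}$ absorbs this leakage. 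I would therefore treat the definition of $\efa$ as the quantity that carries the adaptivity. The argument only needs to check that in $G_1$ the adversary's view before attempt $i$ is a function of $h_{i-1}$ and of randomness independent of $F$ at the fresh points, so the conditional acceptance probability is covered by that supremum. Finally, the epoch tolerance $\{e{-}1,e,e{+}1\}$ folds into a single attempt's acceptance event. It is already inside $\efa$, so it adds no extra factor.
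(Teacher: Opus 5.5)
Your argument is sound, and its core is exactly the paper's: Boole's inequality over the $Q$ adaptive attempts, with each conditional term absorbed by the supremum over reachable transcripts. The difference is where $\mathrm{Adv}^{\mathrm{prf}}$ enters.

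The paper never leaves the real experiment. It applies the union bound to the real verifier. It then cites the PRF advantage only to explain why $\efa$ does not collapse: without $\key$ the codeword $c^\star$ cannot be aimed at, so the idealized $\mathrm{Binom}(\nb,\tfrac12)$ calibration $\efa\le 3p_{\mathrm{tail}}$ is the relevant one.

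You instead hop to a random function first, pay $\mathrm{Adv}^{\mathrm{prf}}$ for the hop through $\mathcal{B}$, and run the union bound in $G_1$. This buys two things. The additive term is derived rather than appended. And $\efa$ becomes a quantity of the world where the paper's binomial calibration (uniform codeword and keystream at the fresh points) is actually computed. The paper, by contrast, calibrates a real-world $\efa$ with an idealized-null computation, and the passage between the two rests on one sentence and on measurement. What the paper's route buys in return is that it needs no simulation at all.

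That is also where you owe two clarifications.

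\emph{Which world $\efa$ lives in.} The statement defines $\efa$ through the real verifier $V$. Under that reading, Boole in $G_0$ already gives $Q\efa$ and your hybrid is idle. Worse, your step $\Pr[G_1]\le Q\efa$ then does not follow from the definition: a supremum over transcripts reachable in $G_0$ does not control conditional acceptance probabilities in $G_1$. Your chain is correct if $\efa$ is taken as the supremum in $G_1$. Say so explicitly; it is the reading under which the $+\mathrm{Adv}^{\mathrm{prf}}$ term does any work.

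\emph{Perfect simulation.} ``The verifier's only secret is $\key$'' does not by itself give perfect simulation, because $\mathcal{B}$ holds only an oracle. Every key-dependent computation it runs must therefore be an evaluation of that one PRF on a further disjoint prefix. The paper asserts this only for $\Hmac$ and $\Prf$, but there are other key-dependent computations:
\begin{itemize}
\item the magnitudes $\zeta$ are also described as keyed;
\item $\mathcal{B}$ must generate the genuine responses for other $(e'',r'')$ that your adversary may observe, whereas your reduction as written only answers verification queries;
\item the recapture reader is FiLM-conditioned on the key.
\end{itemize}
State as an assumption that all of these are derived from the same PRF. The paper's direct argument never needs this assumption.
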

\begin{proof}[Proof sketch]
Boole's inequality assumes no independence:
$\Pr[\exists i\le Q: V(x_i){=}1]\le\sum_i\Pr[V(x_i){=}1]$, and each term is at most $\efa$ because
the conditional is, on every history the adversary can reach, so the sum is at most $Q\efa$. The PRF term is the advantage of distinguishing $\Hmac_{\key}$ from random, which also
bounds the advantage of predicting it, and is what keeps $\efa$ from collapsing: without $\key$ the
expected codeword $c^\star=\Ecc(\Hmac_{\key}(\mact))$ is pseudorandom, so a candidate cannot be aimed
at it.

The size of $\efa$ comes from the test's calibration. Under the idealized null the match count is
$\mathrm{Binom}(\nb,\tfrac12)$ and a single window's ${\ge}47/63$ accept has exact tail
$p_{\mathrm{tail}}=5.9\times10^{-5}$, below the design parameter $\alpha{=}10^{-4}$ the test is set
to. The deployed verifier maximizes over the three epoch hypotheses in the $\pm$slack window, so a
union bound over them gives $\efa\le3p_{\mathrm{tail}}\approx1.8\times10^{-4}$, itself below
$3\alpha=3\times10^{-4}$; Section~\ref{sec:eval-forge} measures $1.6\times10^{-4}$ over $516{,}096$
wrong-key windows, and finds the wrong-key match distribution on $\mathrm{Binom}(63,\tfrac12)$
($31.51$ against $31.5$), so the idealization is checked rather than assumed. Instantiating the
bound with $\alpha$ in place of $\efa$ drops the epoch multiplicity and does not survive those
measurements.

We state this as $(\efa,Q)$-security rather than unforgeability because $\efa$ is a deployment
quantity, not a negligible function of a security parameter: at $\nb{=}63$ it does not shrink with
any $\lambda$, so the guarantee is statistical and bounded by the online query budget the threat
model grants. Rate limiting the verifier is therefore load-bearing rather than incidental, and a
deployment that does not bound $Q$ does not inherit the bound.
\end{proof}

\begin{proposition}[Recovery requires the key]\label{prop:keyneed}
Without the keystream $\Prf_{\key}(\mact)$, the combined statistic \eqref{eq:comb} has
zero expected correlation with $c$. More generally, under an ideal PRF the embedded
signs are independent of $c$, so no keyless decoder recovers $c$ better than chance;
with a real PRF the advantage of any such decoder is bounded by the PRF distinguishing
advantage.
\end{proposition}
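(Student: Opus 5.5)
I would prove the three claims of Proposition~\ref{prop:keyneed} in order, all in the ideal-PRF world first, and then move to a real PRF by a standard hybrid. Throughout, I condition on the codeword $c$ (equivalently on $m_e$) and treat the keystream $w=\Prf_{\key}(\mact)$ as a uniform vector in $\{0,1\}^{\dq}$, independent of $c$, of $\zm$, and of the magnitudes $|\zeta|$. The disjoint label prefixes $\texttt{msg}\|\cdot$ and $\texttt{mask}\|\cdot$ from Section~\ref{sec:embed} justify this independence: the message and the keystream are outputs of one ideal random function on disjoint inputs.

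\emph{Step 1 (the combined statistic without the key).} A keyless party either omits the dewhitening factor or substitutes some $w'$ that is independent of $w$. In either case the $i$-th summand of \eqref{eq:comb} is $\ell_i\,\sigma_i$, where $\sigma_i$ is fixed or independent of $w$. The obvious decomposition is to write $s_i=(2\,\mathrm{rep}(c)_i-1)(1-2w_i)$. The signs the extractor sees then carry the factor $(1-2w_i)$, a uniform $\pm1$ variable independent of everything else. I would argue that the joint law of $(J,\boldsymbol{\ell})$ given $c$ does not depend on $c$ (this is Step 2), so $\mathbb{E}[\hat c_j\mid c]$ is constant in $c$. The correlation with $(2c_j-1)$, which is symmetric under the bit flip $c\mapsto c\oplus e_j$, is therefore zero. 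If a more direct computation is wanted, I would condition on everything except $w_i$ and use $\mathbb{E}[1-2w_i]=0$ on the sign the extractor reads.

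\emph{Step 2 (general keyless decoder).} This is the core. Since $w$ is uniform and independent of $c$, the vector $b=\mathrm{rep}(c)\oplus w$ is uniform on $\{0,1\}^{\dq}$ for \emph{every} fixed $c$. This is the one-time-pad argument: XOR with a uniform independent string is a bijection that preserves uniformity. Hence $b$, and so $s$ and $\zp=|\zeta|\odot s$, is independent of $c$, with law $\Norm(0,I_\dq)$, exactly as in the proof of Proposition~\ref{prop:wmind}. The observation is produced as $\xw=\dec(\zm,\zp)$ and then passed through any channel, including recapture. It is a randomized function of variables whose joint law is independent of $c$, so it is itself independent of $c$. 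For any decoder $\mathcal{A}$, this gives $\Pr[\mathcal{A}(\text{obs})=c]=\mathbb{E}_c\Pr[\mathcal{A}=c]\le\max_{c'}\Pr[c=c']$, which is the chance rate $2^{-\kb}$ for the uniform pseudorandom $m_e$. The per-bit accuracy is likewise exactly $\tfrac12$.

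\emph{Step 3 (real PRF).} Given a keyless decoder $\mathcal{A}$ with success $\epsilon$ above chance in the real world, I would build a PRF distinguisher $\mathcal{B}$ with oracle access to $F$. $\mathcal{B}$ picks $c$ and $\zeta$ itself, queries $F(\texttt{mask}\|\mact)$ to whiten, runs the rest of the pipeline and $\mathcal{A}$, and outputs $1$ iff $\mathcal{A}$ recovers $c$. With the real PRF oracle this simulates the real experiment exactly; with a random function it simulates the ideal world of Step 2. So $\epsilon\le\mathrm{Adv}^{\mathrm{prf}}(\mathcal{B})$. The main obstacle is a subtlety here: $\mathcal{B}$ must either choose $c$ itself, which requires arguing that the real $c=\Ecc(\Hmac_{\key}(\cdot))$ can be replaced by a uniform one at the cost of another PRF hybrid on the $\texttt{msg}$ prefix, or query the oracle for $c$ as well. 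I would take the second route: query both prefixes from the same oracle, so a single PRF advantage bounds the whole gap. Efficiency of $\mathcal{B}$ needs the decoder $\dec$, the channel, and the extractor to be polynomial-time, which I would state as an assumption.
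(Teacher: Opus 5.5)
Your proposal is correct and rests on the same key fact as the paper's sketch. XOR with an independent uniform keystream makes $b=\mathrm{rep}(c)\oplus w$ uniform for every $c$, so $\zp$, the decoded motion, and any channel output are independent of $c$.

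You differ from the paper in how you obtain the first claim and in the real-PRF clause. The paper proves zero correlation directly: the replicas of $c_j$ carry signs flipped by independent pseudorandom bits, so their un-dewhitened sum cancels. That argument tacitly treats each logit $\ell_i$ as a noisy read of its own sign $s_i$. A learned extractor's $\ell_i$, however, depends on the whole motion and hence on every $w_{i'}$. You instead derive the claim as a corollary of the $c$-independence of the law of $(J,\ell)$. That needs no such model, and it covers any keyless extractor and any wrong keystream drawn independently of the embedding. The direct computation you keep as a fallback is the paper's argument and inherits its tacit model.

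The paper also only asserts the PRF bound. Your distinguisher takes both the \texttt{msg} and \texttt{mask} outputs from one oracle, which turns that assertion into an actual reduction without a separate hybrid replacing the pseudorandom $c$ by a uniform one.

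Two small repairs are needed. First, the bit-flip symmetry you cite is neither available nor needed: $c\oplus e_j$ is not a BCH codeword. A constant $\mathbb{E}[\hat c_j\mid c]$ already makes the covariance vanish. The uncentered correlation then vanishes, and your per-bit accuracy is exactly $\tfrac12$, because each coordinate of a nontrivial cyclic code such as this BCH code is uniform under a uniform message. Second, the paper marks $\zeta$ as keyed, so $\mathcal{B}$ should obtain it from the oracle as well rather than sample it. Otherwise its run with the real PRF is not the real experiment.
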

\begin{proof}[Proof sketch]
The roughly four replicas of bit $c_j$ are embedded as $c_j\oplus\Prf_{\key,i}$
for \emph{distinct} pseudorandom $\Prf_{\key,i}$. Summing them without dewhitening
(Eq.~\eqref{eq:comb} with an absent or wrong keystream) averages signs flipped by
independent pseudorandom bits, so the expectation is $0$. The general statement follows
because an ideal $\Prf_{\key}$ makes the embedded sign pattern a one-time pad on
$\mathrm{rep}(c)$: the pattern is uniform and independent of $c$, so any decoder's
output is too.
\end{proof}

Proposition~\ref{prop:keyneed} is the crux of recapture security: even an adversary that recovers the raw signs perfectly
cannot decode without $\key$. Replay resistance follows from
Proposition~\ref{prop:forge} with per-epoch renewal: a session recorded at epoch
$e$ carries $c^\star(e)$, which mismatches $c^\star(e')$ for every $e'$ outside the
$\pm$slack window the verifier tolerates for clock skew, so a replayed recording
older than that window is rejected exactly as a forgery would be. Inside the window
a same-nonce recording still authenticates by construction, which is the bounded freshness surface we measure and the reason slack is tunable
toward zero.

\section{The Recapture Channel}\label{sec:recapture}
Recapture renders $\xw$, screen-records it, and runs a monocular pose estimator
$\lift$ on a perspective projection $\proj$, recovering joints
$\hat J=\lift(\proj(\rec(\xw)))$. A real estimator projects motion onto the
\emph{clean-motion manifold}, discarding off-manifold detail. Because a
WM-IND-safe watermark is subtle \emph{by design} (its energy sits below the
estimator's reconstruction fidelity), a \emph{generic} estimator strips it to chance, on both the
synthetic and the real recapture channel. We therefore identify an inherent
undetectability/robustness tension, and conclude that recapture, not cropping, is
the watermark's true weak channel.

\paragraph{The receiver-side keyed reader.}
Our key observation is that the verifier holds the key and is part of the system,
so for recapture it may use a \emph{dedicated keyed reader} rather than a generic
estimator. Concretely, we train a $2\mathrm{D}\!\rightarrow\!3\mathrm{D}$ lifter
$\lift_\theta$ together with the extractor $\extr$ on the recaptured distribution
($\proj\!\rightarrow$ lift $\rightarrow$ read), with the prover \emph{frozen} so
that generation-side WM-IND is untouched. The reader is conditioned on the key by
FiLM, and its objective carries an invariance term that drives the readout to chance
when the conditioning is absent or wrong. That term matters for reading the figures
below: it makes the no-key and wrong-key bars a training outcome rather than a
measurement of necessity. This reader recovers the watermark close to the clean read, while
estimator-level watermark indistinguishability still holds: a keyless party running the reader obtains Gen-versus-Base AUC
$\approx 0.5$ at the estimator.

\paragraph{Where the security comes from.}
Our readers are key-conditioned, and removing that conditioning drops them to chance,
but the conditioning is not what makes recovery hard to obtain: a reader trained
without it recovers once it is given the keystream. The gate is the
\emph{same keyed whitening} as everywhere else (Proposition~\ref{prop:keyneed}), and a
\emph{fully keyless} read is exact chance (Figure~\ref{fig:ladder}). A Kerckhoffs adversary that trains its own reader from scratch and is then handed the
dewhitening key still falls short of the deployed one (Figure~\ref{fig:ladder}), which reframes
recovery into a \emph{key-gated} capability of the dedicated verifier rather than a property of any
one network. Section~\ref{sec:eval-recap} reports the numbers, on the synthetic channel and through
the full render-to-video loop.

\begin{figure}[tbp]\centering
\includegraphics[width=0.8\linewidth]{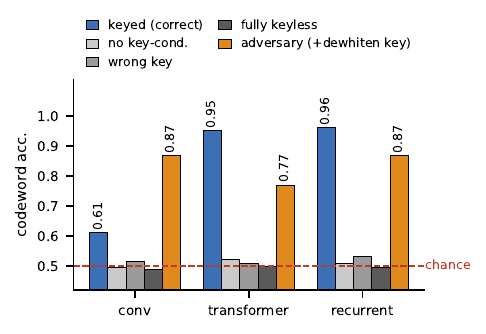}
\caption{Key-necessity ladder. Codeword accuracy for each reader backbone under four key conditions:
correct key, key conditioning removed, wrong key, and fully keyless. The first two sit at chance
because the reader is trained to put them there, so the independent evidence is the orange bar: a
from-scratch Kerckhoffs adversary that never sees the key yet is handed the dewhitening key at test
time.}
\label{fig:ladder}
\end{figure}

\section{Implementation}\label{sec:impl}
We train the motion VAE on HumanML3D~\cite{guo2022humanml3d}, which is built on
AMASS~\cite{mahmood2019amass}. All channels of the threat model are implemented as differentiable layers in an
online curriculum, including a trained monocular lifter in the style of
VideoPose3D~\cite{pavllo2019videopose3d} for recapture. The prover (VAE and codec)
and the verifier (extractor, dewhiten/combine, SPRT) are trained in stages, and
the receiver-side keyed reader is trained with the prover frozen. Because it is
trained against a frozen prover rather than jointly with one, the reader is not tied
to the checkpoint it saw: the shipped reader was trained against an earlier checkpoint
of the prover and reads the shipped prover at $0.949$ on the projected-2D channel and
$0.809$ through the full render-to-RGB loop. The mark is therefore a property of the construction
and not of a co-adapted pair, which is what lets a verifier keep reading after the
generator is updated. We additionally
evaluate on the BOXRR-23 VR dataset~\cite{nair2024boxrr} (three-point head and
hand telemetry), which most closely matches deployed VR sensing; the skeleton and
sensing modality differ from HumanML3D, so that evaluation trains the same
architecture on BOXRR motion.
\smallskip\noindent\textbf{Reproducibility.}
The motion VAE is a six-layer transformer encoder and decoder ($d{=}512$, $8$ heads, feed-forward width $1024$) with a content
latent $\zm\in\R^{(T/4)\times 256}$ downsampled $4\times$ in time and a style latent $\zp\in\R^{256}$ injected by AdaLN; the
codec is $\mathrm{BCH}(63,30,t{=}6)$ keystream-whitened across the $256$ style dimensions with a $10$\,s epoch;
the extractor is an eight-layer transformer ($d{=}384$, $6$ heads, FFN $1536$) over a one-level Haar DWT. We
train with AdamW (learning rate $1$ to $5\times10^{-4}$, weight decay $10^{-4}$, gradient clip $1.0$, batch
$64$ to $128$, seed $42$); the Kullback--Leibler term uses $0.1$-nat free bits and $\beta{=}0.02$ over a $20$k-step warmup. On a
single NVIDIA A100 the VAE trains in $\approx\!4$\,h ($120$k steps), the watermark end to end in $\approx\!8.5$\,h
($150$k steps), and the receiver-side keyed reader in $\approx\!5.5$\,h ($60$k steps), about $18$ GPU-hours
for the released model plus the auxiliary 2D-to-3D lifters.

\section{Evaluation}\label{sec:eval}
The evaluation answers five questions in order, one per subsection, before comparing against
baselines and turning to adaptive attacks. Does MoSign
authenticate robustly, and does it survive the benign channels of the threat model? Is it
undetectable to a keyless observer? Does the key holder still recover the mark through
recapture, and is that recovery genuinely gated on the key? Is the scheme replay-resistant, and
how does forgery scale with the adversary's query budget? And does it compose with an anonymizer without leaking identity? The first
bears on goal~(iv); the rest bear on C3, C4, C2 and C3, as their subsection titles record. C1 is a claim about where the mark is
written and what it is for, so no single experiment tests it; it rests on the authentication,
recapture, and baseline results together. Codeword accuracy is the dewhitened,
soft-combined $63$-bit accuracy. The true-positive rate at a fixed false-positive rate (TPR at FPR) uses the analytic binomial null, checked
against $1400$ empirical null windows for the single-window columns and $10^4$ null
trials per run for the sequential test.

\subsection{Authentication accuracy and robustness}\label{sec:eval-auth}
Table~\ref{tab:robust} reports per-channel codeword accuracy and detection.
MoSign authenticates at \textbf{TPR $=1.0$ at FPR $=10^{-4}$} on clean motion and on
resampling, smoothing, noise and retargeting, with an empirical FPR of $0$ and a median TTA of
$2$ seconds (Figure~\ref{fig:tta}). Two benign channels fall short of that: dropout at $0.96$, and
cropping, the weakest at $0.78$, where a short surviving segment correlates the replica errors; even
there the temporal-DWT read-out keeps the codeword accuracy at $0.87$. A carrier with bandwidth far below an image's therefore
authenticates near-perfectly and degrades gracefully. How far the construction reaches past
this configuration, to unseen motion distributions, to an external frozen generator, and to
other payload lengths, is reported in Appendix~\ref{app:general}.

\begin{table}[tbp]\centering\footnotesize
\caption{Per-channel authentication. \emph{acc.}\ and \emph{TPR} are single-window decisions:
one read of a $196$-frame window against a fixed $47/63$ threshold ($\alpha{=}10^{-4}$), empirical
FPR $0/1400$. \emph{SPRT} is the deployed verifier: $2$\,s windows at $0.5$\,s stride, $256$ streams
per channel over three session nonces (standard deviation ${\le}0.013$), $p_1$ calibrated on the same
$2$\,s statistic, false accepts $0/30000$ each against unwatermarked motion and a wrong key.
\emph{recapture} is the differentiable oracle channel; a real pose estimator is
Table~\ref{tab:recap}. \emph{all seven} is a random ordered composition of the seven single
attacks, each at full strength.}\label{tab:robust}
\begin{tabular}{lccc}
\toprule
Channel & acc. & TPR@$10^{-4}$ & SPRT\\
\midrule
clean      & 0.98 & 1.00 & 0.99\\
resample   & 0.98 & 1.00 & 0.95\\
smooth     & 0.98 & 1.00 & 0.99\\
noise      & 0.98 & 1.00 & 0.99\\
retarget   & 0.97 & 1.00 & 0.99\\
recapture (oracle) & 0.97 & 1.00 & 0.99\\
all seven & 0.96 & 0.99 & 0.92\\
crop       & 0.87 & 0.78 & 0.78\\
dropout    & 0.89 & 0.96 & 0.95\\
\bottomrule
\end{tabular}

\end{table}

\begin{figure}[tbp]\centering
\includegraphics[width=0.8\linewidth]{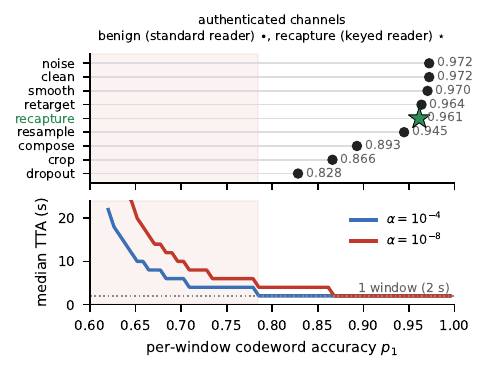}
\caption{Time-to-authenticate. \emph{Top:} per-window codeword accuracy $p_1$: eight benign channels
via the standard extractor (dots), the synthetic recapture channel via the keyed reader (star).
\emph{Bottom:} SPRT time-to-authenticate against $p_1$ (independent $2$\,s windows, $63$-bit
codeword) at two false-accept levels. The shaded band is $p_1$ below the one-epoch threshold; every
channel lies to its right, so all accept on the first window at $\alpha{=}10^{-4}$. At
$\alpha{=}10^{-8}$ the threshold rises to $0.864$ and dropout ($0.828$) needs two windows.}
\label{fig:tta}
\end{figure}

\subsection{Undetectability (C3)}\label{sec:eval-wmind}
Against a worst-case discriminator zoo (a neural detector isomorphic to the
extractor, a permutation two-sample test on the radial-basis-function maximum mean
discrepancy (RBF-MMD), and per-dimension Kolmogorov--Smirnov tests), the
\emph{latent} distribution is preserved. Tested directly on the deployed $\zp$
($\delta{=}0$, fresh session nonce, $N{=}4096$), a trained latent discriminator
measures AUC $0.50$, the pooled KS statistic is $0.0009$ with one of the $256$
per-dimension statistics just outside the family-wise null band ($0.029$ against
$0.025$), and the RBF-MMD permutation test cannot separate it from a fresh
$\Norm(0,I)$ draw ($p=0.86$), as Proposition~\ref{prop:wmind} requires, since
whitening keeps the marginal $\zp$ at $\Norm(0,I)$ and any latent test is then exactly
chance. The neural discriminator, which reads the rendered \emph{motion}
rather than the latent, measures \textbf{AUC $=0.51$} (95\% CI $[0.49,0.54]$,
$N{=}2048$), an interval that contains chance
(Figure~\ref{fig:wmindroc}, blue). Passive indistinguishability is not the same
property as robustness to an adversary that modifies the motion: a gradient can
degrade the band the mark occupies without separating the two branches.
Table~\ref{tab:wmind} and Figure~\ref{fig:wmindroc} substantiate the two requirements of the embedding: removing the keystream whitening (AUC $0.99$) or a single
generation-only asymmetric operation (AUC $0.78$ for $\Delta\theta$, up toward $1$ for
an imprint objective) makes the watermark \emph{strongly} detectable, so the symmetric
whitened construction is a \emph{necessary} condition and prior asymmetric designs would
be trivially detectable in our worst-case setting.

\begin{table}[tbp]\centering\footnotesize\caption{WM-IND and the symmetry ablation. \emph{Worst-case} is the strongest of the
discriminators above ($N{=}2048$ in the motion domain, $4096$ in the latent).}\label{tab:wmind}
\setlength{\tabcolsep}{4pt}
\begin{tabular}{lc}
\toprule
Construction & worst-case AUC\\
\midrule
MoSign (symmetric, final)                 & \textbf{0.51}\\
\;+ amplitude correction $\Delta\theta$ & $\approx 0.78$\\
\;+ imprint objective & $\approx 1.00$\\
\bottomrule
\end{tabular}

\end{table}

\begin{figure}[tbp]\centering
\includegraphics[width=0.8\linewidth]{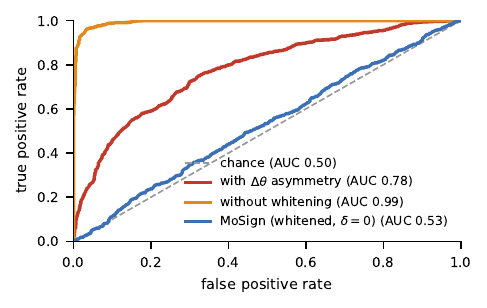}
\caption{WM-IND ROC of the worst-case discriminator (C3), one independently trained discriminator
per curve. The whitened construction (blue) holds the chance diagonal across repeated trainings
(AUC $0.50$ to $0.53$); removing the keystream whitening (orange) or adding a generation-only
$\Delta\theta$ asymmetry (red) lifts it off.}
\label{fig:wmindroc}
\end{figure}

\subsection{Recapture and key-necessity (C4)}\label{sec:eval-recap}
Recapture, the defining VR channel, renders the watermarked motion, screen-records it,
and re-estimates pose with a monocular estimator. We measure it in two settings, a
synthetic projected-2D proxy and the full render-record-reestimate loop (sim-to-real,
below), against three readers of the recaptured stream (Table~\ref{tab:recap}): a
\emph{generic estimator} that holds no key (the attacker), a \emph{generation-side}
watermark trained to survive such estimators, and the \emph{receiver-side keyed reader}
that holds the verifier's key (our contribution).

On the synthetic channel the generic estimator strips the mark to chance ($0.52$). The
generation-side route recovers $0.74$ against the in-distribution estimator it trained
against, but only $0.64$ against a held-out estimator of different architecture and data
split, the architecture-general ceiling the undetectability/robustness tension predicts.
The receiver-side keyed reader instead recovers $0.95$ to $0.96$ on its two stronger
backbones, essentially the clean read.

Figure~\ref{fig:ladder} isolates what gates recovery. Every reader degrades to chance once the key is
withheld: dropping the in-estimator key conditioning (no-key), supplying a wrong
key, or going \emph{fully keyless} (no key conditioning and no dewhitening key)
all sit on the chance line ($0.49$ to $0.53$ across the nine cells), as Proposition~\ref{prop:keyneed} predicts, and the
estimator-level WM-IND AUC stays $\approx 0.50$ throughout (so the readers are
recoverers, not detectors). Three architecturally independent backbones (a dilated
convolution, a transformer, and a recurrent network) all read the same frozen
watermark above chance, two of them well, so the recoverable signal lives in the motion
and key, not in a single
network. Finally, the strongest keyless adversary we can construct, a reader
trained from scratch and then \emph{handed} the dewhitening key, still only
reaches $0.77$ to $0.87$ (and at conv even exceeds that backbone's keyed reader,
showing the in-estimator conditioning is a strength knob, not the security gate),
yet collapses to chance without the key it was handed.

\begin{table}[tbp]\centering\footnotesize\caption{Recapture. \emph{synth.} is the projected-2D channel; \emph{real loop} is render to RGB,
off-the-shelf 2D detection, and a real lift, with no scale oracle. Generic estimators strip the mark
on both channels; the key holder recovers it on both.}\label{tab:recap}
\setlength{\tabcolsep}{4pt}
\begin{tabular}{lcc}
\toprule
Reader of the recaptured motion & synth. & real loop\\
\midrule
generic estimator (no key)               & 0.52 & 0.50\\
\midrule
generation-side, in-distribution & 0.74 & --\\
generation-side, held-out & 0.64 & 0.52\\
\midrule
receiver-side keyed (conv) & 0.61 & --\\
receiver-side keyed (transformer) & 0.95 & --\\
receiver-side keyed (recurrent) & 0.96 & \textbf{0.81}\\
\;\;fully keyless (any backbone) & $\approx 0.50$ & 0.49\\
\bottomrule
\end{tabular}

\end{table}

\paragraph{Sim-to-real: the real recapture loop.}
The synthetic projected-2D channel abstracts the render-then-re-estimate loop; we also run it in full
(Figure~\ref{fig:sim2real}): we fit the watermarked motion to a textured body model~\cite{loper2015smpl}, render
it to RGB, detect 2D keypoints with an off-the-shelf detector~\cite{jocher2023yolov8}, and lift them to 3D with a
real estimator pretrained on motion capture~\cite{pavllo2019videopose3d}. The three-reader hierarchy persists on
this loop (Table~\ref{tab:recap}, right column): the generic estimator strips the mark to chance
($0.50$), the generation-side route that reached $0.64$ on a held-out \emph{synthetic} estimator also falls to
near chance ($0.52$) against a \emph{real} one (so the cross-architecture proxy mildly overestimates it), while
the key-holding verifier's keyed reader, applied \emph{zero-shot} (trained only on the synthetic channel),
recovers far above both floors. Figure~\ref{fig:g8recap} tests it under deployment
conditions: with \emph{no} scale oracle the key-holder holds a flat ${\approx}0.81$ from a frontal
view to a $90^\circ$ side profile, while the keyless floor stays at chance at every viewpoint
($\mathrm{TPR}@10^{-4}{\leq}0.004$, zero in $63$ of $64$ cells; YOLO detection ${\geq}0.996$ throughout). The per-clip scale oracle a deployed
verifier lacks is not load-bearing, moving recovery by only $0.001$ ($0.818\!\to\!0.817$); a paired
perfect-2D projection control (the reader's idealized channel) holds at $0.95$ across the same viewpoints, so the
watermark is \emph{not} viewpoint-sensitive and the constant ${\approx}0.14$ gap is the cost of the real
render-and-detect pipeline, not the mark. Joint-dropout occlusion degrades recovery gracefully
($0.81\!\to\!0.78$ at $30\%$ of joints) while the keyless party never authenticates. The recapture robustness
thus holds \emph{without an oracle and across the viewpoints} a real screen-record imposes; its operating conditions and limits are taken up in the discussion.

\begin{figure*}[tp]\centering
\begin{tikzpicture}[mosfig]
  \node[plain, text width=24mm, minimum height=13mm] (m) at (0,0) {};
  \node[font=\scriptsize, anchor=north] at (0,0.523) {watermarked motion};
  \foreach \d/\k in {-0.90/poseA, -0.30/poseB, 0.30/poseC, 0.90/poseD}{%
    \pic[draw=mosNeu!85] at (\d,-0.30) {\k};}
  \cardn{r}{mosNeu}{24mm}{(3.07,0)}{11mm}{\icw{screen}render}{fit a body model,\\rasterize to RGB}
  \cardn{d}{mosAcc}{24mm}{(6.12,0)}{11mm}{\icw{cam}record}{off-the-shelf 2D\\keypoint detector}
  \cardn{l}{mosKey}{24mm}{(9.17,0)}{11mm}{\icw{keyicon}re-estimate}{lift 2D to 3D:\\generic or keyed}
  \cardn{x}{mosVer}{19mm}{(11.98,0)}{11mm}{\icw{check}read}{extractor\\and SPRT}
  \node[card={mosKey}{22mm}, minimum height=7mm] (key) at (9.17,-1.34) {verifier key $\key$};

  \node[card={mosAdv}{22mm}, minimum height=12.5mm] (o1) at (14.93, 0.72) {};
  \node[font=\scriptsize, align=center, text width=22mm, anchor=north, inner sep=0pt]
       at ([yshift=-1.4mm]o1.north) {generic lifter:\\chance, $0.50$};
  \node[card={mosVer}{22mm}, minimum height=12.5mm] (o2) at (14.93,-0.72) {};
  \node[font=\scriptsize, align=center, text width=22mm, anchor=north, inner sep=0pt]
       at ([yshift=-1.4mm]o2.north) {key-conditioned:\\\textbf{0.81}};
  \foreach \n/\c/\v in {o1/mosAdv/0.00, o2/mosVer/0.67}{%
    \begin{scope}[shift={(\n.south)}]
      \fill[\c!12] (-0.95,0.13) rectangle (0.95,0.30);
      \fill[\c!55] (-0.95,0.13) rectangle (\v,0.30);
      \draw[\c!45, line width=0.4pt] (-0.95,0.13) rectangle (0.95,0.30);
      \draw[mosNeu!60, densely dashed, line width=0.4pt] (0,0.09) -- (0,0.34);
    \end{scope}}

  \begin{scope}[on background layer]
    \node[chan, fit=(r)(d)] (PX) {};
  \end{scope}
  \node[lab, above=0.4mm of PX.north] {the signal is pixels here};
  \draw[flow] (m) -- (r); \draw[flow] (r) -- (d); \draw[flow] (d) -- (l); \draw[flow] (l) -- (x);
  \draw[keyflow] (key) -- (l);
  \coordinate (fk) at (13.26,0);
  \draw[flow, -] (x.east) -- (fk);
  \draw[flow] (fk) |- (o1.west);
  \draw[flow] (fk) |- (o2.west);

  \node[draw=mosNeu!35, fill=mosNeu!5, rounded corners=2pt, line width=0.4pt,
        minimum width=15mm, minimum height=10mm] (T1) at ( 1.55,-2.45) {};
  \node[draw=mosAcc!40, fill=mosAcc!6, rounded corners=2pt, line width=0.4pt,
        minimum width=15mm, minimum height=10mm] (T2) at ( 4.60,-2.45) {};
  \node[draw=mosAcc!40, fill=mosAcc!6, rounded corners=2pt, line width=0.4pt,
        minimum width=15mm, minimum height=10mm] (T3) at ( 7.65,-2.45) {};
  \node[draw=mosNeu!35, fill=mosNeu!5, rounded corners=2pt, line width=0.4pt,
        minimum width=15mm, minimum height=10mm] (T4) at (10.70,-2.45) {};
  \pic at ( 1.55,-2.45) {rep3d};  \pic at ( 4.60,-2.45) {reppix};
  \pic at ( 7.65,-2.45) {rep2d};  \pic at (10.70,-2.45) {rep3dr};
  \foreach \xg in {1.55, 4.60, 7.65, 10.70}{%
    \draw[mosNeu!35, densely dotted, line width=0.4pt] (\xg,-0.80) -- (\xg,-1.95);}
  \node[lab, anchor=north, align=center, text width=22mm] at ( 1.55,-3.02) {3D skeleton};
  \node[lab, anchor=north, align=center, text width=22mm] at ( 4.60,-3.02) {rendered pixels};
  \node[lab, anchor=north, align=center, text width=26mm] at ( 7.65,-3.02) {2D keypoints};
  \node[lab, anchor=north, align=center, text width=22mm] at (10.70,-3.02) {3D skeleton,\\re-estimated};
  \node[lab, anchor=west] at (12.60,-2.45) {what travels on each arrow};
\end{tikzpicture}
\caption{The real recapture loop: watermarked motion fitted to a body model, rendered to RGB, 2D
keypoints detected, then lifted back to 3D. Between render and lift the signal is not a skeleton at
all, so nothing guarantees a subtle latent perturbation survives. It does, but only for the key
holder.}
\label{fig:sim2real}
\end{figure*}

\begin{figure}[tbp]\centering
  \includegraphics[width=0.8\columnwidth]{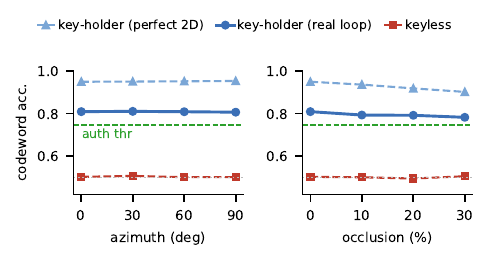}
  \caption{Hardening the recapture claim (real render-to-detection loop, $N{=}288$, no scale oracle).
  \emph{Left:} key-holder recovery across azimuth, against a paired perfect-2D projection control.
  \emph{Right:} the same loop under joint-dropout occlusion. The keyless floor stays at chance
  throughout.}
  \label{fig:g8recap}
\end{figure}

\subsection{Unforgeability (C2)}\label{sec:eval-forge}
We measure whether a keyless adversary can make the verifier \emph{accept} forged
motion, the empirical counterpart of Proposition~\ref{prop:forge}, benchmarked
against the black-box watermark-forgery attack of~\cite{muller2025blackbox}. Three
adversaries observe signed motion in public VR but never hold the key: a
\emph{replay} adversary that resubmits an observed signature under the verifier's
fresh nonce; a \emph{gradient} adversary that fully controls the white-box
extractor output and crafts motion toward an arbitrary target codeword; and a
black-box \emph{query} adversary that submits up to $Q{=}64$ perturbed candidates
and keeps any the verifier accepts. Figure~\ref{fig:forgery} reports forgery
success against the SPRT design floor $\alpha=10^{-4}$. The two offline adversaries sit
at the floor and the query adversary does not: replay at $10^{-4}$ ($1/10056$ trials, 95\% Wilson upper bound $4.5\times10^{-4}$);
the gradient adversary at $5\times10^{-4}$ \emph{despite} matching its own chosen
target codeword $81\%$ of the time, which shows that controlling the extractor does
not help when the target is the secret keyed codeword; and the query adversary reaches
$1.7\times10^{-2}$ at $Q{=}64$ ($4144$ trials), rising about linearly in the budget from
$2.4\times10^{-4}$ at $Q{=}1$. That $Q{=}1$ figure estimates the $\efa$ of
Proposition~\ref{prop:forge} for this adversary, and $Q\hat\efa$ tracks the sweep:
$2.4\times10^{-4}$, $1.9\times10^{-3}$, $7.7\times10^{-3}$ and $1.5\times10^{-2}$ at
$Q{=}1,8,32,64$ against a measured $2.4\times10^{-4}$, $2.4\times10^{-3}$, $1.0\times10^{-2}$ and
$1.7\times10^{-2}$, a ratio of $1.00$ to $1.34$. $Q\hat\efa$ is not itself a bound: $\hat\efa$ is a
point estimate from finitely many trials, and the ratio exceeding one reflects that rather than any
failure of the union bound, which needs no independence among the queries. The one-sided $95\%$
upper confidence bound on $\efa$, $1.1\times10^{-3}$ from $1/4144$, puts every measured rate below
$Q\efa$. What the dependence among queries does rule out is reading the sweep as
$1-(1-\efa)^Q$. This adversary is therefore bounded by the query budget the threat model grants it
rather than by the mark. Freezing both the message and the keystream, so the embedded
sign pattern never renews, is forged $100\%$ of the time by both replay and a single query.
Separating the two shows which one carries the freshness: with a \emph{fixed} message but a
challenge-bound keystream, replay still fails at $9.9\times10^{-5}$, exactly the rate of the full
scheme, because $\mathrm{rep}(c)\oplus\Prf_{\key}(\mact)$ is re-randomized by the pad alone. It is
the challenge-bound keystream, not renewal of the message, that leaves the adversary no fixed
target; the message binds the response to the challenge semantically and drives the epoch
matching, but it is not what defeats replay.

\begin{figure}[tbp]\centering
\includegraphics[width=0.8\linewidth]{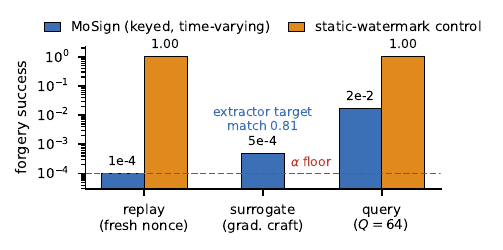}
\caption{Forgery benchmark (C2). Forgery success, meaning the verifier accepts, on a log scale for
three keyless adversaries and a non-renewing control, against the SPRT design floor $\alpha$
(dashed).}
\label{fig:forgery}
\end{figure}

\paragraph{Cross-key false-accept at population scale.}
The forgery benchmark crafts a signature; the dual question is whether a verifier holding key $A$ ever accepts
a different legitimate signer using key $B$. We enroll $K{=}64$ keys, sign every motion instance with each key,
and read it back with every verifier key (the deployed $\pm1$-epoch max). Because a wrong key un-whitens to an
independent codeword, the cross-key codeword match is a fair coin (Figure~\ref{fig:crosskey}): its mean is
$31.5$ of $63$, matching $\mathrm{Binom}(63,\tfrac12)$ ($31.51$ measured), with no partial leakage. A single
window's ${\ge}47/63$ accept sits at the binomial tail ($P{=}5.9\times10^{-5}$, or $1.8\times10^{-4}$ under the
three-epoch maximum), and the measured pooled cross-key false-accept is $1.6\times10^{-4}$ ($84/516{,}096$),
not growing with $K$ ($1.4$ to $2.6\times10^{-4}$ across $K\in\{8,16,32,64\}$, no monotone multiplication). The deployed decision is
stronger still: running the actual SPRT (the $\pm1$-epoch max with its $\log 3$ correction) over $10^6$ impostor
streams accepts \emph{none} (operating false-accept ${\le}3.8\times10^{-6}$, 95\% Wilson), so the headline
$10^{-4}$ floor holds across keys and after the epoch multiplicity, not only under the single-key binomial null.

\begin{figure}[tbp]\centering
\includegraphics[width=0.8\linewidth]{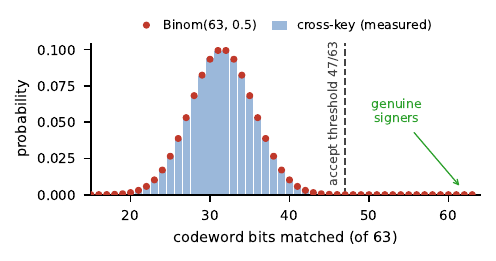}
\caption{Cross-key codeword-match distribution (C2), $K{=}64$ keys. Under the wrong key the match
(bars) lands on $\mathrm{Binom}(63,\frac12)$ (markers), while genuine signers recover
${\approx}61.7/63$. The $10^6$-stream SPRT result treats the $516{,}096$ measured windows as
independent; windows within one real stream are correlated, so it is a bound under that assumption.}
\label{fig:crosskey}
\end{figure}

\paragraph{Epoch-slack, nonce misuse, and splicing.}
Anti-replay rests on the epoch window and nonce freshness, which we sweep rather than assert
(Figure~\ref{fig:g10relay}). The epoch-slack is one knob with two faces: the clock-skew / transmission
tolerance and, equally, the relay window. A same-nonce response replayed $d$ epochs late is accepted exactly
when $d\le\mathrm{slack}$, so a forward relay can be at most $(\mathrm{slack}{+}1)\tau$ stale (${\le}20$\,s at
the deployed slack $1$, $\tau{=}10$\,s) before the slack rejects it; \emph{distance-bounding} is the standard
liveness mitigation. A cross-session replay under a fresh verifier-issued nonce is rejected at every slack
($0/512$), while reusing one $(\text{epoch},\text{nonce})$ revives replay ($1.0$ accept); nonce hygiene is a
stated assumption. The strong attack on the windowed SPRT is \emph{splicing}: an adversary concatenates
recorded high-signal signed windows against the live nonce. Because each window's codeword is bound to its old
$(\text{epoch},\text{nonce})$, against the live nonce its match is a fair coin ($31.3$ versus
$\mathrm{Binom}(63,\tfrac12){=}31.5$), so splicing real windows beats a random impostor by nothing and
best-of-$Q$ accepts none even at $Q{=}10^4$; the residual brute-force ceiling is closed by rate-limiting, not
by a watermark property.

\begin{figure}[tbp]\centering
\includegraphics[width=0.8\linewidth]{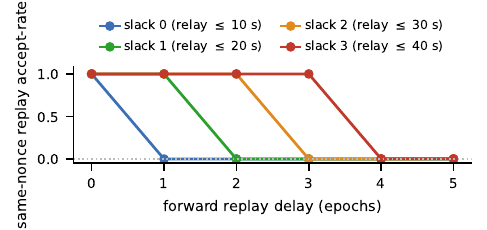}
\caption{Epoch-slack tradeoff (C2). Same-nonce delayed-replay accept rate versus forward replay
delay, one curve per epoch-slack. Slack is at once the clock-skew / relay tolerance and the replay window.}
\label{fig:g10relay}
\end{figure}

\subsection{Composition with anonymization (C3)}\label{sec:eval-anon}
\begin{table}[tbp]\centering\footnotesize
\caption{Composition with anonymization (C3): rank-1 re-identification over $1000$ BOXRR-23 users
under Deep Motion Masking's own note-anchored protocol.}\label{tab:composition}
\begin{tabular}{lc}
\toprule
Condition & re-id (rank-1)\\
\midrule
clean                & 0.884\\
anonymized (DMM)     & 0.034\\
VAE recon            & 0.026\\
watermarked (MoSign) & 0.024\\
\midrule
chance               & 0.001\\
\bottomrule
\end{tabular}

\end{table}
MoSign is designed to stack on top of an upstream anonymizer; we verify it adds
\emph{no} de-anonymization side channel when composed with the state-of-the-art VR
anonymizer, Deep Motion Masking~\cite{nair2024deepmasking}. The user's motion is
anonymized, and MoSign then watermarks the anonymized motion (encode, overwrite $\zp$
with the keyed draw, decode). We measure identity leakage under the anonymizer's own
published re-identification protocol (a note-anchored attacker over $1000$ BOXRR-23
users), which reduces re-identification from $0.884$ on clean motion to $0.034$
anonymized (chance $0.001$; Table~\ref{tab:composition}).

Turning the watermark on does not make the user more identifiable. Applied to the
anonymized motion, MoSign's watermark survives for the key holder (codeword accuracy $0.987$) while leaving re-identification unchanged:
the watermarked motion re-identifies at $0.024$, statistically indistinguishable from
the anonymized-only motion ($0.034$) and from the watermark-free VAE round-trip
($0.026$, the lossy reconstruction that the watermark rides on). The residual leakage
is bounded by the anonymizer, not the watermark, because the message lives only in the
keyed $\zp$, which is overwritten by a draw that carries neither the user's
style-identity nor a privacy leak. The authentication channel and the privacy of the
motion are thus orthogonal.

\subsection{Baseline comparison}\label{sec:eval-baselines}
We score MoSign, two ablations of its own freshness, and three representative baselines on one
harness, every scheme read by the same code on the same validation split. \emph{Gen-Static} is an
in-generation generative watermark with a fixed payload, a port of Gaussian
Shading~\cite{yang2024gaussianshading} and Stable
Signature~\cite{fernandez2023stablesignature} that keeps our exact embedding but never
renews the codeword. \emph{Post-hoc SS} is a classical additive spread-spectrum
watermark~\cite{cox1997spread} applied post-hoc to the \emph{rendered} motion and read by a
matched filter, in a keyed (time-varying) and a static variant. The generative schemes
share the \emph{same} trained model and differ only in the codec, so any gap is the
paradigm and not capacity. Table~\ref{tab:baselines} reports in-generation, worst-case WM-IND AUC
under a pooled test and under the replica-pair test, replay-forgery rate, codeword accuracy,
recapture recovery, and the watermark footprint.

Freezing the message alone changes nothing measurable, within $0.01$ of MoSign on every
column; freezing the keystream alone is caught, but only by the pair test, which is why that column
is in the table; freezing both is Gen-Static. The post-hoc schemes fail before any of that, on
in-generation and on payload.

\begin{table}[tp]\centering\footnotesize
\caption{Baseline comparison. The first three rows vary where the freshness in $\mathrm{rep}(c)\oplus\Prf_{\key}(\mact)$ comes from; the rest is
prior work. \emph{WM-IND} is the stronger of two pooled tests and \emph{pair} the replica-pair product
of Figure~\ref{fig:multisample}, both worst case over $N\le256$. \emph{Replay} is bit agreement between
a replayed clip and a live challenge, not an accept rate, so it targets $0.5$ (boldface where met).
\emph{Recapture} is read by a held-out keyless estimator. The three baselines are defined above.}\label{tab:baselines}
\setlength{\tabcolsep}{3pt}\scriptsize
\begin{tabular}{lcccccc}
\toprule
Scheme & WM-IND & pair & Replay & Codeword & Recapture & MPJPE (m)\\
\midrule
\multicolumn{7}{@{}l}{\emph{in-generation}}\\
MoSign & \textbf{0.52} & \textbf{0.51} & \textbf{0.49} & 0.97 & 0.62 & 0.044\\
\ \ static message & \textbf{0.52} & \textbf{0.50} & \textbf{0.51} & 0.98 & 0.63 & 0.044\\
\ \ static keystream & \textbf{0.52} & 1.00 & \textbf{0.48} & 0.98 & 0.64 & 0.044\\
\addlinespace[1.5pt]
Gen-Static & 1.00 & 1.00 & 0.97 & 0.97 & 0.57 & 0.043\\
Stable-Sig static & 1.00 & -- & 0.98 & 0.98 & 0.63 & 0.044\\
\addlinespace[2.5pt]
\multicolumn{7}{@{}l}{\emph{post-hoc spread-spectrum}}\\
\ \ keyed & 1.00 & -- & \textbf{0.50} & 0.65 & 0.54 & 0.084\\
\ \ static & 1.00 & -- & 0.60 & 0.60 & 0.51 & 0.085\\
\bottomrule
\end{tabular}

\end{table}

\begin{figure}[tbp]
  \centering
  \includegraphics[width=0.9\columnwidth]{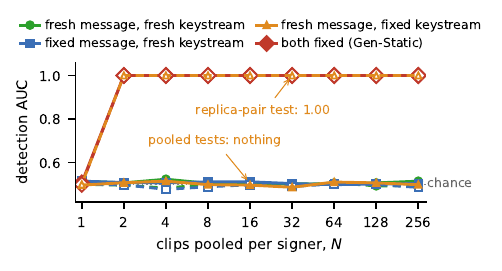}
  \caption{Where the freshness has to come from, on the true codec ($N$ clips pooled per long-lived
signer). Message and keystream are varied independently; the magnitudes stay fresh in all four
combinations. Solid is the stronger of two pooled tests over the bag mean (the $L_2$ norm and a learned mean-pool network); dashed is the replica-pair product
$s_i s_{i+\nb}$. Three curves coincide at $1.00$: the fixed-keystream arm's pair test,
and both tests on the both-fixed arm.}
  \label{fig:multisample}
\end{figure}

\paragraph{A static payload breaks undetectability and unforgeability together.}
MoSign's embedding \emph{is} Gaussian Shading (step~4 of the embedding); Gen-Static keeps
that exact distribution-preserving construction but fixes the payload. By the
Gaussian-Shading theorem a single Gen-Static latent is marginally Gaussian, so it is
\emph{single-sample} undetectable. Our WM-IND adversary is stronger: a keyless party observes a user
across many epochs and tests the joint distribution. A fixed payload pins the sign of every latent coordinate, so across
samples each coordinate carries a constant bias $\mathbb{E}[\zp[i]]=\pm\sqrt{2/\pi}$ that a
trained discriminator detects (AUC $1.00$), and the same fixed codeword replays (forgery
$0.97$). Pooling $N$ clips from one signer traces how this detectability emerges with the observer's
budget (Figure~\ref{fig:multisample}): a keyless test of the per-coordinate mean (the sufficient statistic for an unknown constant bias)
sits at chance at $N{=}1$ but separates the fixed
codeword from anonymized motion at AUC $1.0$ once $N{\ge}2$, whereas the deployed per-epoch renewal stays at
chance (AUC ${\approx}0.50$, at most $0.52$ across all $N$ and both detectors) for every $N$ up to $256$.
What multi-sample undetectability needs is that the embedded sign pattern be fresh across samples,
and either half of $\mathrm{rep}(c)\oplus\Prf_{\key}(\mact)$ supplies that: a fixed message under a
challenge-bound keystream stays at chance for every $N$ we test. The two halves are not
interchangeable, though, and a pooled test cannot see the difference. Because the codeword is replicated across $\zp$, two dimensions carrying the same codeword bit
have signs differing by $(-1)^{w_i\oplus w_{i+\nb}}$, which does not involve the message at all. A
frozen keystream makes that product a constant the observer can average, and the replica-pair test
reaches AUC $1.00$ at $N{\ge}2$ while both pooled detectors sit at chance. Gen-Static freezes both
halves and is caught by every test; whitening under a fixed key injects no cross-sample variation, so
the whitened and non-whitened static ports behave identically (both at AUC $1.00$ in
Table~\ref{tab:baselines}). A static fingerprint would itself link a user across sessions, so
per-epoch renewal is required by the anonymity setting rather than merely convenient.

\paragraph{In-generation buys what the post-hoc baselines cannot.}
A classical post-hoc spread-spectrum mark, added to the rendered motion rather than the
generative latent, cannot carry the keyed payload at MoSign's footprint. Even granted our
keyed nonce, so that it is itself replay-unforgeable (forgery $0.50$), and allowed roughly
twice MoSign's MPJPE and six times its jerk, it reaches only $0.65$ codeword accuracy,
stays detectable (AUC $1.00$, the additive carrier is a learnable subspace), and is stripped
by recapture. This rules out the classical additive family rather than every post-hoc design; a
learned post-hoc encoder might close part of the gap, and we have no lower bound forbidding one. A strength sweep makes the matched-utility point exact: the anonymized host
alone already costs MPJPE $0.044$, MoSign's entire total distortion to three decimals, so
constrained to that budget the mark carries no signal (codeword accuracy $0.50$, chance), and even its faintest readable
setting (codeword accuracy $0.53$) costs $1.6\times$ the MPJPE and $4.3\times$ the jerk. The
budget is spent on the anonymization the host requires, leaving none for a post-hoc mark.
This is the in-generation advantage of C1: writing the message into the generative latent
yields an
undetectable, low-footprint, recapture-survivable payload that a post-hoc additive mark
cannot match.

\subsection{Adaptive attacks and defenses}\label{sec:eval-adaptive}
Reading the watermark needs the key (Figure~\ref{fig:ladder}); this subsection asks whether
replay can defeat the protocol without it. Removal is the other half, and
Appendix~\ref{app:removal} measures every removal attack we implemented.

\paragraph{Renewal defeats replay.}
The codeword and the whitening keystream both derive from
$\Hmac_{\key}(\text{epoch}\,\|\,\text{nonce}\,\|\,\text{ctx})$, so a recording is
valid only for its original $(\text{epoch},\text{nonce})$. Table~\ref{tab:replay}
replays a watermarked clip at increasing epoch offsets: a replay delayed beyond one
epoch, or carried across sessions under a fresh nonce, is rejected $100\%$
(time-to-authenticate infinite), complementing the fresh-nonce replay measured in the
forgery benchmark. The only acceptance window is the
$\pm1$-epoch slack (${\sim}10$\,s of clock-skew tolerance), tunable toward zero at
the cost of boundary-window robustness; real-time relay of a live response is out of
scope, as for any challenge-response.

\begin{table}[tbp]\centering\footnotesize
\caption{Anti-replay ($256$ clips, slack $=1$): a recording is valid only for its
original $(\text{epoch},\text{nonce})$. The accept rates of $1.00$ are $256/256$.}\label{tab:replay}
\begin{tabular}{lcc}
\toprule
Replay scenario & false-accept & TTA\\
\midrule
fresh (legit, $\Delta{=}0$)       & $1.00$          & $2$\,s\\
delayed $+1$ epoch (within slack) & $1.00$          & $2$\,s\\
delayed $+2$ epochs (stale)       & $\mathbf{0.00}$ & $\infty$\\
delayed $+5$ epochs (stale)       & $\mathbf{0.00}$ & $\infty$\\
cross-session (fresh nonce)       & $\mathbf{0.00}$ & $\infty$\\
\bottomrule
\end{tabular}

\end{table}

\section{Discussion and Limitations}\label{sec:disc}
\noindent\textbf{Sim-to-real conditions.}
The real render-to-RGB-to-detection validation (Table~\ref{tab:recap}) reports a
key-holder recovery of $0.81$ with no scale oracle, flat across azimuth and falling to $0.76$ under
elevation and occlusion; harder renders with clothing texture, heavier occlusion, and in-the-wild video
remain future work. For the generation-side route the cross-architecture readers remain our estimator-agnostic
proxy.

\smallskip\noindent\textbf{Scope of the recapture claim.}
Robustness is for the \emph{key-holding} verifier, not ``any estimator''; the
generation-side route is fundamentally bounded by the undetectability/robustness
tension.

\smallskip\noindent\textbf{Utility cost.}
Measured against ground truth the pipeline is far from smooth: jerk is $3.9\times$
($153.3$ against $38.9$\,m/s$^3$) and foot-skating, the fraction of planted-foot frames in which the
foot slides, more than doubles ($0.142$ against $0.066$). Most of that is not the watermark: the same
autoencoder round trip with the mark switched off already reads $138.4$\,m/s$^3$ and $0.126$, so the
round trip accounts for $2.6\times$ of the $3.9\times$ and the mark adds $15$\,m/s$^3$ on top, an
$11\%$ increase, $+0.016$ of foot-skating and $0.027$\,m of positional displacement. Charging the mark
the total would charge it for a cost the round trip incurs on its own and that every scheme we compare
pays, since all are marked on the same reconstructed host. Both are aggregate proxies; whether the
residual is visible at $20$\,fps is a question for a psychophysical study with human subjects.

\smallskip\noindent\textbf{Generation latency.}
Verification is streaming; generation is windowed. The encoder attends over the whole epoch, so at the
shipped $196$-frame window the prover buffers $9.8$\,s at $20$\,fps before emitting, while its compute
is $7.3$\,ms on an L40S and $51.7$\,ms on a CPU, of which the mark itself is $0.6$\,ms. The delay is
specific to embedding in a generative latent: Deep Motion Masking, the anonymizer we compose with, is
causal (a causal convolution over per-frame layers) and adds no lookahead. It is a design point rather
than a floor. Compute is flat in window length and shortening the epoch keeps every window's challenge
fresh, so a $40$-frame epoch buffers $2.0$\,s instead of $9.8$\,s and costs at most $0.01$ of codeword
accuracy on any channel, gaining on cropping (Appendix~\ref{app:general}). We ship the long epoch for
per-read redundancy and this is what it costs; a causal encoder would remove the buffer altogether.

\smallskip\noindent\textbf{Undetectability assumption.}
The guarantee rests on $\zp\sim\Norm(0,I)$; drift to a non-Gaussian $\zp$ weakens
it to $\varepsilon$-undetectability.

\section{Conclusion}\label{sec:concl}
MoSign turns a watermark into an authentication protocol on the motion channel
that is provably undetectable, replay-resistant, forgery-resistant to a bounded-query
adversary, and, uniquely, robust to recapture for the key holder. It is, to our knowledge, the first
in-generation watermark for skeletal motion and the first used for entity
authentication, and it offers a security-grounded answer to the analog hole of
public VR.

\section*{Ethical Considerations}
\noindent\textbf{Human subjects.} This work involved no human subjects. Every measurement runs on
previously collected, publicly released motion, and we collected no new data from people. We ran no user
study; the mark's effect on the motion is measured by MPJPE, jerk and foot-skating.

\smallskip\noindent\textbf{Data and licensing.} HumanML3D~\cite{guo2022humanml3d} and
AMASS~\cite{mahmood2019amass} are public under their respective licenses; BOXRR-23~\cite{nair2024boxrr}
was obtained through its application process and used under its terms; SMPL~\cite{loper2015smpl} is used
under its model license. All three distribute motion under pseudonymous participant identifiers, which we
use only as class labels.

\smallskip\noindent\textbf{Running a de-anonymization attacker.} Section~\ref{sec:eval-anon} runs a
re-identification attack over $1000$ BOXRR-23 users. It is the anonymizer's own published benchmark, run
for one purpose: to test whether adding MoSign opens a de-anonymization side channel. The attacker
separates pseudonymous identifiers inside the dataset, and we made no attempt to link any identifier to a
real person. Both the attacker and the anonymizer are prior work, so the experiment introduces no
capability that did not already exist.

\smallskip\noindent\textbf{Attacks we built.} Evaluating the scheme required implementing
removal, forgery, laundering, and recapture adversaries, some of them granted the extractor weights or the
session key. All of them run against our own system on public data. We attacked no live service, so there
is no third party to notify and no vulnerability to disclose.

\smallskip\noindent\textbf{Dual use.} A mark that a bystander cannot see but a key holder can read is the
kind of primitive that can be turned against the person carrying it. Two properties bound the risk. The mark is readable only with $\key$, so it
creates no public identifier; and it is renewed every epoch under a fresh nonce, so the mark does not
link two sessions for a keyless observer. A verifier holding the persistent shared key can link
them, which is the authentication it was given the key for. A platform that both mandates the watermark and holds the key can
confirm that a stream came from a given key. That is exactly the authentication the scheme provides, which is why
$\key$ belongs to the party the user chooses to prove themselves to and not to the platform by default.

\smallskip\noindent\textbf{Scope.} MoSign composes with an upstream anonymizer and authenticates
through it; removing identity from motion is the anonymizer's job. Relied on as an anonymity
mechanism it would be carrying weight it was never given.

\ifarxiv\else
\section*{Open Science}
\noindent The implementation and the configuration the reported system runs under will be made publicly available.

\smallskip\noindent Data follows the licenses of its sources: HumanML3D and AMASS are public and
the release carries their preprocessing scripts, BOXRR-23 is available on application, and SMPL comes
from its own distributor. The recapture pipeline uses an off-the-shelf 2D detector and a monocular
lifter, both public.
\fi

\bibliographystyle{plain}
\bibliography{references}

@inproceedings{nair2023unique,
  title = {Unique Identification of 50,000+ Virtual Reality Users from Head \& Hand Motion Data},
  author = {Nair, Vivek and Guo, Wenbo and Mattern, Justus and Wang, Rui and O'Brien, James F. and Rosenberg, Louis and Song, Dawn},
  booktitle = {32nd USENIX Security Symposium (USENIX Security 23)},
  year = {2023},
  eprint = {2302.08927},
  archivePrefix = {arXiv},
  url = {https://arxiv.org/abs/2302.08927},
}

@misc{nair2023inferring,
  title = {Inferring Private Personal Attributes of Virtual Reality Users from Head and Hand Motion Data},
  author = {Nair, Vivek and Rack, Christian and Guo, Wenbo and Wang, Rui and Li, Shuixian and Huang, Brandon and Cull, Atticus and O'Brien, James F. and Latoschik, Marc and Rosenberg, Louis and Song, Dawn},
  howpublished = {arXiv preprint arXiv:2305.19198},
  year = {2023},
  eprint = {2305.19198},
  archivePrefix = {arXiv},
  url = {https://arxiv.org/abs/2305.19198},
}

@article{schach2025crossxr,
  title = {Motion-Based User Identification across XR and Metaverse Applications by Deep Classification and Similarity Learning},
  author = {Schach, Lukas and Rack, Christian and McMahan, Ryan P. and Latoschik, Marc Erich},
  journal = {Frontiers in Virtual Reality},
  year = {2026},
  note = {arXiv:2509.08539, September 2025},
  eprint = {2509.08539},
  archivePrefix = {arXiv},
  url = {https://arxiv.org/abs/2509.08539},
}

@article{nair2024boxrr,
  title = {Berkeley Open Extended Reality Recordings 2023 (BOXRR-23): 4.7 Million Motion Capture Recordings from 105,852 Extended Reality Device Users},
  author = {Nair, Vivek and Guo, Wenbo and Wang, Rui and O'Brien, James F. and Rosenberg, Louis and Song, Dawn},
  journal = {IEEE Transactions on Visualization and Computer Graphics (TVCG)},
  year = {2024},
  eprint = {2310.00430},
  archivePrefix = {arXiv},
  url = {https://arxiv.org/abs/2310.00430},
}

@inproceedings{nair2024deepmasking,
  title = {Deep Motion Masking for Secure, Usable, and Scalable Real-Time Anonymization of Ecological Virtual Reality Motion Data},
  author = {Nair, Vivek and Guo, Wenbo and O'Brien, James F. and Rosenberg, Louis and Song, Dawn},
  booktitle = {IEEE Conference on Virtual Reality and 3D User Interfaces (IEEE VR)},
  pages = {493--500},
  year = {2024},
  eprint = {2311.05090},
  archivePrefix = {arXiv},
  url = {https://arxiv.org/abs/2311.05090},
}

@article{meng2024avatarhunter,
  title = {De-Anonymizing Avatars in Virtual Reality: Attacks and Countermeasures},
  author = {Meng, Yan and Zhan, Yuxia and Li, Jiachun and Du, Suguo and Zhu, Haojin and Shen, Xuemin (Sherman)},
  journal = {IEEE Transactions on Mobile Computing},
  volume = {23},
  number = {12},
  pages = {13342--13357},
  year = {2024},
  url = {https://ieeexplore.ieee.org/document/10592805/},
}

@inproceedings{agarwal2007tamper,
  title = {Tamper Proofing 3D Motion Data Streams},
  author = {Agarwal, Parag and Prabhakaran, Balakrishnan},
  booktitle = {Advances in Multimedia Modeling (MMM), Lecture Notes in Computer Science vol. 4351, Springer},
  year = {2007},
  url = {https://doi.org/10.1007/978-3-540-69423-6_71},
}

@inproceedings{li2007progressive,
  title = {Watermarking for Progressive Human Motion Animation},
  author = {Li, Shiyu and Okuda, Masahiro},
  booktitle = {IEEE International Conference on Multimedia and Expo (ICME)},
  year = {2007},
  url = {https://doi.org/10.1109/ICME.2007.4284886},
}

@inproceedings{motwani2008skinning,
  title = {Robust Watermarking of 3D Skinning Mesh Animations},
  author = {Motwani, Rakhi C. and Ambardekar, Ameya and Motwani, Mukesh C. and Harris Jr., Frederick C.},
  booktitle = {IEEE International Conference on Acoustics, Speech and Signal Processing (ICASSP)},
  pages = {1752--1756},
  year = {2008},
  url = {https://doi.org/10.1109/ICASSP.2008.4517969},
}

@inproceedings{du2012maxima,
  title = {Blind Robust Watermarking Mechanism Based on Maxima Curvature of 3D Motion Data},
  author = {Du, Ling and Cao, Xiaochun and Zhang, Muhua and Fu, Huazhu},
  booktitle = {Information Hiding (IH), Lecture Notes in Computer Science vol. 7692, Springer},
  pages = {110--124},
  year = {2012},
  url = {https://doi.org/10.1007/978-3-642-36373-3_8},
}

@inproceedings{zhu2018hidden,
  title = {HiDDeN: Hiding Data With Deep Networks},
  author = {Zhu, Jiren and Kaplan, Russell and Johnson, Justin and Fei-Fei, Li},
  booktitle = {European Conference on Computer Vision (ECCV)},
  year = {2018},
  eprint = {1807.09937},
  archivePrefix = {arXiv},
  url = {https://arxiv.org/abs/1807.09937},
}

@inproceedings{tancik2020stegastamp,
  title = {StegaStamp: Invisible Hyperlinks in Physical Photographs},
  author = {Tancik, Matthew and Mildenhall, Ben and Ng, Ren},
  booktitle = {IEEE/CVF Conference on Computer Vision and Pattern Recognition (CVPR)},
  year = {2020},
  eprint = {1904.05343},
  archivePrefix = {arXiv},
  url = {https://arxiv.org/abs/1904.05343},
}

@inproceedings{fernandez2023stablesignature,
  title = {The Stable Signature: Rooting Watermarks in Latent Diffusion Models},
  author = {Fernandez, Pierre and Couairon, Guillaume and Jégou, Hervé and Douze, Matthijs and Furon, Teddy},
  booktitle = {IEEE/CVF International Conference on Computer Vision (ICCV)},
  year = {2023},
  eprint = {2303.15435},
  archivePrefix = {arXiv},
  url = {https://arxiv.org/abs/2303.15435},
}

@inproceedings{wen2023treering,
  title = {Tree-Ring Watermarks: Fingerprints for Diffusion Images that are Invisible and Robust},
  author = {Wen, Yuxin and Kirchenbauer, John and Geiping, Jonas and Goldstein, Tom},
  booktitle = {Advances in Neural Information Processing Systems (NeurIPS)},
  year = {2023},
  eprint = {2305.20030},
  archivePrefix = {arXiv},
  url = {https://arxiv.org/abs/2305.20030},
}

@inproceedings{yang2024gaussianshading,
  title = {Gaussian Shading: Provable Performance-Lossless Image Watermarking for Diffusion Models},
  author = {Yang, Zijin and Zeng, Kai and Chen, Kejiang and Fang, Han and Zhang, Weiming and Yu, Nenghai},
  booktitle = {IEEE/CVF Conference on Computer Vision and Pattern Recognition (CVPR)},
  year = {2024},
  eprint = {2404.04956},
  archivePrefix = {arXiv},
  url = {https://arxiv.org/abs/2404.04956},
}

@inproceedings{huang2024robin,
  title = {ROBIN: Robust and Invisible Watermarks for Diffusion Models with Adversarial Optimization},
  author = {Huang, Huayang and Wu, Yu and Wang, Qian},
  booktitle = {Advances in Neural Information Processing Systems (NeurIPS)},
  year = {2024},
  eprint = {2411.03862},
  archivePrefix = {arXiv},
  url = {https://arxiv.org/abs/2411.03862},
}

@inproceedings{arabi2025hiddennoise,
  title = {Hidden in the Noise: Two-Stage Robust Watermarking for Images},
  author = {Arabi, Kasra and Feuer, Benjamin and Witter, R. Teal and Hegde, Chinmay and Cohen, Niv},
  booktitle = {International Conference on Learning Representations (ICLR)},
  year = {2025},
  eprint = {2412.04653},
  archivePrefix = {arXiv},
  url = {https://arxiv.org/abs/2412.04653},
}

@article{li2025stdmdiffusion,
  title = {Robust watermarking for diffusion models based on STDM and latent space fine-tuning},
  author = {Li, Lei and Zhang, Xinpeng and Feng, Guorui and Wang, Zichi and Wu, Dan and Wu, Hanzhou},
  journal = {Journal of Information Security and Applications},
  year = {2025},
  url = {https://doi.org/10.1016/j.jisa.2025.104167},
}

@inproceedings{kirchenbauer2023watermark,
  title = {A Watermark for Large Language Models},
  author = {Kirchenbauer, John and Geiping, Jonas and Wen, Yuxin and Katz, Jonathan and Miers, Ian and Goldstein, Tom},
  booktitle = {International Conference on Machine Learning (ICML)},
  year = {2023},
  eprint = {2301.10226},
  archivePrefix = {arXiv},
  url = {https://arxiv.org/abs/2301.10226},
}

@inproceedings{sanroman2024audioseal,
  title = {Proactive Detection of Voice Cloning with Localized Watermarking},
  author = {San Roman, Robin and Fernandez, Pierre and Défossez, Alexandre and Furon, Teddy and Tran, Tuan and Elsahar, Hady},
  booktitle = {International Conference on Machine Learning (ICML)},
  year = {2024},
  eprint = {2401.17264},
  archivePrefix = {arXiv},
  url = {https://arxiv.org/abs/2401.17264},
}

@misc{jang2024lvmark,
  title = {LVMark: Robust Watermark for Latent Video Diffusion Models},
  author = {Jang, Youngdong and Jang, MinHyuk and Lee, JaeHyeok and Yang, Feng and Oh, Gyeongrok and Jeong, Jongheon and Kim, Sangpil},
  howpublished = {arXiv preprint arXiv:2412.09122},
  year = {2024},
  eprint = {2412.09122},
  archivePrefix = {arXiv},
  url = {https://arxiv.org/abs/2412.09122},
}

@misc{li2025gaussianseal,
  title = {GaussianSeal: Rooting Adaptive Watermarks for 3D Gaussian Generation Model},
  author = {Li, Runyi and Zhang, Xuanyu and Tong, Chuhan and Xu, Zhipei and Zhang, Jian},
  howpublished = {arXiv preprint arXiv:2503.00531},
  year = {2025},
  eprint = {2503.00531},
  archivePrefix = {arXiv},
  url = {https://arxiv.org/abs/2503.00531},
}

@inproceedings{chen2025guardsplat,
  title = {GuardSplat: Efficient and Robust Watermarking for 3D Gaussian Splatting},
  author = {Chen, Zixuan and Wang, Guangcong and Zhu, Jiahao and Lai, Jianhuang and Xie, Xiaohua},
  booktitle = {IEEE/CVF Conference on Computer Vision and Pattern Recognition (CVPR)},
  year = {2025},
  eprint = {2411.19895},
  archivePrefix = {arXiv},
  url = {https://arxiv.org/abs/2411.19895},
}

@inproceedings{muller2025blackbox,
  title = {Black-Box Forgery Attacks on Semantic Watermarks for Diffusion Models},
  author = {Müller, Andreas and Lukovnikov, Denis and Thietke, Jonas and Fischer, Asja and Quiring, Erwin},
  booktitle = {IEEE/CVF Conference on Computer Vision and Pattern Recognition (CVPR)},
  pages = {20937--20946},
  year = {2025},
  eprint = {2412.03283},
  archivePrefix = {arXiv},
  url = {https://arxiv.org/abs/2412.03283},
}

@inproceedings{guo2022humanml3d,
  title = {Generating Diverse and Natural 3D Human Motions from Text},
  author = {Guo, Chuan and Zou, Shihao and Zuo, Xinxin and Wang, Sen and Ji, Wei and Li, Xingyu and Cheng, Li},
  booktitle = {IEEE/CVF Conference on Computer Vision and Pattern Recognition (CVPR)},
  pages = {5142--5151},
  year = {2022},
  url = {https://doi.org/10.1109/CVPR52688.2022.00509},
}

@inproceedings{mahmood2019amass,
  title = {AMASS: Archive of Motion Capture as Surface Shapes},
  author = {Mahmood, Naureen and Ghorbani, Nima and Troje, Nikolaus F. and Pons-Moll, Gerard and Black, Michael J.},
  booktitle = {IEEE/CVF International Conference on Computer Vision (ICCV)},
  year = {2019},
  eprint = {1904.03278},
  archivePrefix = {arXiv},
  url = {https://arxiv.org/abs/1904.03278},
}

@article{mason2022style100,
  title = {Real-Time Style Modelling of Human Locomotion via Feature-Wise Transformations and Local Motion Phases},
  author = {Mason, Ian and Starke, Sebastian and Komura, Taku},
  journal = {Proceedings of the ACM on Computer Graphics and Interactive Techniques (SIGGRAPH / I3D)},
  year = {2022},
  eprint = {2201.04439},
  archivePrefix = {arXiv},
  url = {https://arxiv.org/abs/2201.04439},
}

@article{harvey2020lafan1,
  title = {Robust Motion In-betweening},
  author = {Harvey, F{\'e}lix G. and Yurick, Mike and Nowrouzezahrai, Derek and Pal, Christopher J.},
  journal = {ACM Transactions on Graphics (SIGGRAPH)},
  volume = {39},
  number = {4},
  articleno = {60},
  year = {2020},
}

@inproceedings{tevet2023mdm,
  title = {Human Motion Diffusion Model},
  author = {Tevet, Guy and Raab, Sigal and Gordon, Brian and Shafir, Yonatan and Cohen-Or, Daniel and Bermano, Amit H.},
  booktitle = {International Conference on Learning Representations (ICLR)},
  year = {2023},
  eprint = {2209.14916},
  archivePrefix = {arXiv},
  url = {https://arxiv.org/abs/2209.14916},
}

@inproceedings{guo2024momask,
  title = {MoMask: Generative Masked Modeling of 3D Human Motions},
  author = {Guo, Chuan and Mu, Yuxuan and Javed, Muhammad Gohar and Wang, Sen and Cheng, Li},
  booktitle = {IEEE/CVF Conference on Computer Vision and Pattern Recognition (CVPR)},
  year = {2024},
  eprint = {2312.00063},
  archivePrefix = {arXiv},
  url = {https://arxiv.org/abs/2312.00063},
}

@inproceedings{zhang2023t2mgpt,
  title = {T2M-GPT: Generating Human Motion from Textual Descriptions with Discrete Representations},
  author = {Zhang, Jianrong and Zhang, Yangsong and Cun, Xiaodong and Huang, Shaoli and Zhang, Yong and Zhao, Hongwei and Lu, Hongtao and Shen, Xi},
  booktitle = {IEEE/CVF Conference on Computer Vision and Pattern Recognition (CVPR)},
  year = {2023},
  eprint = {2301.06052},
  archivePrefix = {arXiv},
  url = {https://arxiv.org/abs/2301.06052},
}

@inproceedings{pavllo2019videopose3d,
  title = {3D Human Pose Estimation in Video with Temporal Convolutions and Semi-Supervised Training},
  author = {Pavllo, Dario and Feichtenhofer, Christoph and Grangier, David and Auli, Michael},
  booktitle = {IEEE/CVF Conference on Computer Vision and Pattern Recognition (CVPR)},
  year = {2019},
  eprint = {1811.11742},
  archivePrefix = {arXiv},
  url = {https://arxiv.org/abs/1811.11742},
}

@inproceedings{shin2024wham,
  title = {WHAM: Reconstructing World-grounded Humans with Accurate 3D Motion},
  author = {Shin, Soyong and Kim, Juyong and Halilaj, Eni and Black, Michael J.},
  booktitle = {IEEE/CVF Conference on Computer Vision and Pattern Recognition (CVPR)},
  year = {2024},
  eprint = {2312.07531},
  archivePrefix = {arXiv},
  url = {https://arxiv.org/abs/2312.07531},
}

@inproceedings{li2021hybrik,
  title = {HybrIK: A Hybrid Analytical-Neural Inverse Kinematics Solution for 3D Human Pose and Shape Estimation},
  author = {Li, Jiefeng and Xu, Chao and Chen, Zhicun and Bian, Siyuan and Yang, Lixin and Lu, Cewu},
  booktitle = {IEEE/CVF Conference on Computer Vision and Pattern Recognition (CVPR)},
  year = {2021},
  eprint = {2011.14672},
  archivePrefix = {arXiv},
  url = {https://arxiv.org/abs/2011.14672},
}

@article{loper2015smpl,
  title = {{SMPL}: A Skinned Multi-Person Linear Model},
  author = {Loper, Matthew and Mahmood, Naureen and Romero, Javier and Pons-Moll, Gerard and Black, Michael J.},
  journal = {ACM Transactions on Graphics (Proc. SIGGRAPH Asia)},
  volume = {34},
  number = {6},
  pages = {248:1--248:16},
  year = {2015},
}

@misc{jocher2023yolov8,
  title = {Ultralytics {YOLOv8}},
  author = {Jocher, Glenn and Chaurasia, Ayush and Qiu, Jing},
  year = {2023},
  howpublished = {\url{https://github.com/ultralytics/ultralytics}},
}

@inproceedings{christ2024undetectable,
  title = {Undetectable Watermarks for Language Models},
  author = {Christ, Miranda and Gunn, Sam and Zamir, Or},
  booktitle = {Proceedings of the 37th Conference on Learning Theory (COLT)},
  series = {Proceedings of Machine Learning Research},
  volume = {247},
  pages = {1125--1139},
  year = {2024},
  url = {https://proceedings.mlr.press/v247/christ24a.html},
}

@article{fairoze2025publicly,
  title = {Publicly-Detectable Watermarking for Language Models},
  author = {Fairoze, Jaiden and Garg, Sanjam and Jha, Somesh and Mahloujifar, Saeed and Mahmoody, Mohammad and Wang, Mingyuan},
  journal = {{IACR} Communications in Cryptology},
  volume = {1},
  number = {4},
  year = {2025},
  doi = {10.62056/ahmpdkp10},
}

@article{chaum1985security,
  title = {Security without Identification: Transaction Systems to Make Big Brother Obsolete},
  author = {Chaum, David},
  journal = {Communications of the ACM},
  volume = {28},
  number = {10},
  pages = {1030--1044},
  year = {1985},
  doi = {10.1145/4372.4373},
}

@inproceedings{camenisch2001credentials,
  title = {An Efficient System for Non-transferable Anonymous Credentials with Optional Anonymity Revocation},
  author = {Camenisch, Jan and Lysyanskaya, Anna},
  booktitle = {Advances in Cryptology, EUROCRYPT 2001},
  series = {LNCS},
  volume = {2045},
  pages = {93--118},
  year = {2001},
}

@inproceedings{pfeuffer2019behavioural,
  title = {Behavioural Biometrics in {VR}: Identifying People from Body Motion and Relations in Virtual Reality},
  author = {Pfeuffer, Ken and Geiger, Matthias J. and Prange, Sarah and Mecke, Lukas and Buschek, Daniel and Alt, Florian},
  booktitle = {Proceedings of the 2019 CHI Conference on Human Factors in Computing Systems (CHI)},
  year = {2019},
  doi = {10.1145/3290605.3300340},
}

@article{cox1997spread,
  title = {Secure Spread Spectrum Watermarking for Multimedia},
  author = {Cox, Ingemar J. and Kilian, Joe and Leighton, F. Thomson and Shamoon, Talal},
  journal = {IEEE Transactions on Image Processing},
  volume = {6},
  number = {12},
  pages = {1673--1687},
  year = {1997},
  doi = {10.1109/83.650120},
}

@article{wong2001secret,
  title = {Secret and Public Key Image Watermarking Schemes for Image Authentication and Ownership Verification},
  author = {Wong, Ping Wah and Memon, Nasir},
  journal = {IEEE Transactions on Image Processing},
  volume = {10},
  number = {10},
  pages = {1593--1601},
  year = {2001},
  doi = {10.1109/83.951543},
}

@inproceedings{miller2020vrauth,
  title = {Within-System and Cross-System Behavior-Based Biometric Authentication in Virtual Reality},
  author = {Miller, Robert and Banerjee, Natasha Kholgade and Banerjee, Sean},
  booktitle = {IEEE Conference on Virtual Reality and 3D User Interfaces Abstracts and Workshops (VRW)},
  pages = {311--316},
  year = {2020},
}

\appendix
\section{Related Work}\label{sec:related}
\noindent\textbf{VR motion biometrics and anonymization.}
Nair et al.\ established large-scale identification from head and hand
motion~\cite{nair2023unique}, attribute inference~\cite{nair2023inferring}, and
the BOXRR-23 dataset~\cite{nair2024boxrr}; Schach et al.\ measure how far it
carries between extended-reality (XR) applications and find that it generalizes
weakly~\cite{schach2025crossxr}. Deep Motion
Masking~\cite{nair2024deepmasking} is the state-of-the-art anonymizer, and
AvatarHunter~\cite{meng2024avatarhunter} demonstrates de-anonymization from
recorded avatar gait. These works define the threat and the operating regime that
MoSign augments with authentication.

\smallskip\noindent\textbf{Authenticating without identifying.}
The goal of proving something about a party without revealing who they are originates with
anonymous credentials~\cite{chaum1985security,camenisch2001credentials}, which authenticate the
holder of a credential at the moment a session is established. That guarantee is bound to a
handshake, not to what subsequently travels over the channel, so it leaves open whether the motion
arriving now is the motion the credential holder produced; MoSign carries the proof in the stream
itself, which is what lets it reject stale replay and splicing. At the other
end of the design space, VR behavioural biometrics authenticate directly from
motion~\cite{pfeuffer2019behavioural,miller2020vrauth}, but do so by recognizing the person, which
is the capability the anonymization requirement removes.

\smallskip\noindent\textbf{Classical motion and animation watermarking.}
Robust and fragile watermarking of motion-capture streams and skinned animation
embeds marks via skeleton dynamics, key-framed animation, skin weights, or
spatio-temporal curvature~\cite{agarwal2007tamper,li2007progressive,
motwani2008skinning,du2012maxima}. They are post-hoc, hand-crafted, and not coupled to
a generator; their goals are copyright and tamper detection, the latter a form of content
authentication, as in the fragile and semi-fragile image schemes that verify media has not
been altered since signing~\cite{wong2001secret}, rather than of the entity authentication we
target: an artifact is vouched for after the fact, whereas we vouch for a live party under a
per-epoch challenge, which makes freshness and replay rather than tamper localization the
binding constraints. Our Post-hoc spread-spectrum
baseline is a construction in this style,
keyed so that it too renews per epoch, which isolates what the in-generation route buys.

\smallskip\noindent\textbf{In-generation and learned watermarking.}
HiDDeN~\cite{zhu2018hidden} introduced differentiable encoder/decoder training
with a noise layer, which underlies our recapture-robust training;
StegaStamp~\cite{tancik2020stegastamp} survives a print-then-photograph loop, the
physical analogue of recapture. For diffusion, Stable
Signature~\cite{fernandez2023stablesignature} roots a mark in the latent decoder;
Tree-Ring~\cite{wen2023treering} and Hidden-in-the-Noise~\cite{arabi2025hiddennoise}
mark the initial noise; ROBIN~\cite{huang2024robin} and STDM-based
schemes~\cite{li2025stdmdiffusion} push the robustness/imperceptibility frontier.
\emph{Gaussian Shading}~\cite{yang2024gaussianshading} is provably
distribution-preserving and is the direct basis of our embedding, which we extend with a
per-epoch keyed renewal so that undetectability holds against an observer who collects many
samples. Beyond images,
LVMark~\cite{jang2024lvmark} decodes video watermarks in a low-frequency
3D-wavelet band (a temporal-robustness idea we reuse), and
GaussianSeal~\cite{li2025gaussianseal} and GuardSplat~\cite{chen2025guardsplat}
watermark 3D Gaussian generation. These establish ``first generative watermark for
a new modality'' as a recognized contribution, which we provide for skeletal
motion while additionally elevating the goal from provenance to authentication.

\smallskip\noindent\textbf{Statistical detection and attacks.}
The LLM $z$-test framework~\cite{kirchenbauer2023watermark} grounds our
hypothesis-testing decision; AudioSeal~\cite{sanroman2024audioseal} performs
localized per-frame detection, which fits streaming VR. Recent black-box
\emph{forgery} of semantic watermarks~\cite{muller2025blackbox} motivates our
unforgeability analysis and adaptive-adversary evaluation.
Our verification is symmetric: the verifier holds $\key$. Publicly-detectable constructions
for language models obtain asymmetric verification from digital signatures~\cite{fairoze2025publicly},
at a payload cost that a $63$-bit motion epoch cannot currently absorb, a limitation we return to at the end.

\smallskip\noindent\textbf{Motion generation, data, and 3D pose estimation.}
MoSign's generator is a motion VAE, but the embedding primitive needs only a Gaussian latent, so it also
runs inside a frozen, externally trained Motion Diffusion Model (MDM)~\cite{tevet2023mdm}. Discrete
token-logit backbones such as T2M-GPT~\cite{zhang2023t2mgpt} and MoMask~\cite{guo2024momask} carry the
message in a non-Gaussian latent and are left to future work. The datasets we use are
HumanML3D~\cite{guo2022humanml3d} (built on AMASS~\cite{mahmood2019amass}),
LAFAN1~\cite{harvey2020lafan1}, 100STYLE~\cite{mason2022style100}, and
BOXRR-23~\cite{nair2024boxrr}. Monocular 3D pose estimation supplies the recapture
channel~\cite{pavllo2019videopose3d}, with WHAM~\cite{shin2024wham} and HybrIK~\cite{li2021hybrik} as
references for stronger estimators than the one we attack with.


\section{Generality of the embedding primitive}\label{app:general}
The body evaluates the deployed configuration: one generator, one dataset, one payload length.
This appendix reports the four experiments that ask how far the construction reaches beyond it.
None of them is required for the claims of the paper; each says something about where the
primitive would still work.

\smallskip\noindent\textbf{Zero-shot transfer to unseen motion distributions.}
The verifier's read-out also transfers \emph{zero-shot}. Applying the frozen HumanML3D
watermark to two unseen datasets, LAFAN1~\cite{harvey2020lafan1} and
100STYLE~\cite{mason2022style100} ($170$ and $900$ windows), the clean codeword reads at TPR $1.0$ (codeword accuracy
$0.96$ on both), every single benign channel stays intact, and only crop, dropout and their
seven-attack composition degrade ($0.80$ to $0.92$), so the extractor is not overfit to
HumanML3D content. The generator is
dataset-specific, however: on these unseen motions the HumanML3D-trained VAE mispredicts the global root
trajectory (about $2$\,m of global drift, though the local pose degrades only modestly, root-relative mean
per-joint position error (MPJPE)
$0.2$ versus $0.04$ in-distribution), a generic root-velocity-integration limitation that the watermark does
\emph{not} cause, since anonymized and watermarked motion drift within $0.02$\,m of each other. A deployment on a new distribution
thus retrains the generator and its jointly-trained extractor, while the cryptographic codec and the
constructional undetectability and $(\efa,Q)$ guarantees carry over.

\paragraph{The embedding primitive ports to an external generator.}
The embedding is also not tied to our VAE. We instantiate it in a \emph{frozen}, externally trained
MDM~\cite{tevet2023mdm} (Table~\ref{tab:backbones}): the keyed codeword shades MDM's initial diffusion noise
$x_T$ exactly as it shades $\zp$, the frozen sampler runs to a motion, and the codeword is read back by DDIM
inversion. A diffusion backbone is invertible, so the read-out is the generator's own inverse and
\emph{nothing is trained}: across $N{=}128$ generations the clean codeword reads at codeword accuracy $0.995$
(TPR $1.0$), the written $x_T$ stays $\Norm(0,I)$ (latent WM-IND AUC $0.52$), and a wrong-nonce read is chance
($0.50$). This is the primitive in a text-conditioned, pure-generation backbone; the motion-conditioned
reconstruction we deploy keeps the VAE, whose separate content ($\zm$) and style ($\zp$) latents let the
watermark ride in $\zp$ without disturbing the user's motion.

\begin{table}[tbp]
\centering
\footnotesize
\setlength{\tabcolsep}{4pt}
\caption{The embedding primitive across generator backbones. VAE is our motion-conditioned model (deployed;
learned extractor); MDM is a frozen external text-to-motion diffusion model read by denoising diffusion implicit model (DDIM) inversion (no trained
reader). WM-IND AUC and replay are inherited from the shared codec.}
\label{tab:backbones}
\setlength{\tabcolsep}{3pt}
\begin{tabular}{llcc}
\toprule
Backbone & Read-out & Codeword & WM-IND\\
\midrule
VAE (deployed) & extractor & $0.98$  & $0.51$\\
MDM (frozen)   & DDIM inv. & $0.995$ & $0.52$\\
\bottomrule
\end{tabular}
\end{table}

\smallskip\noindent\textbf{Payload length.}
The operating point embeds a $63$-bit codeword per epoch. To map the
capacity-robustness tradeoff we trained otherwise-identical models at codeword
lengths $\nb\in\{31,63,127,255\}$ (replication $\dq/\nb$ from $8$ down
to $1$) and measured the per-read codeword accuracy
(Figure~\ref{fig:payload}). At high redundancy the empirical reliability tracks
the idealized repetition-coding bound
$p_1(\nb)=\Phi(\sqrt{\dq/\nb}\,\Phi^{-1}(q))$ (dashed), where $q$ is the
per-dimension sign accuracy calibrated at $\nb{=}63$, but past
$\nb=63$ it falls well below it and plateaus near the raw per-dimension accuracy
($0.73$ clean at $\nb=127$ versus the bound's $0.92$): once redundancy runs out the
embedding cannot hold per-dimension quality up, so the practical capacity is
bounded rather than following the idealized curve. The default $63$-bit codeword (a
$\mathrm{BCH}(63,30,t{=}6)$ message) sits at the knee, in the regime where every channel's
per-window accuracy clears the one-epoch threshold of the sequential test.

\begin{figure}[tbp]\centering
\includegraphics[width=\linewidth]{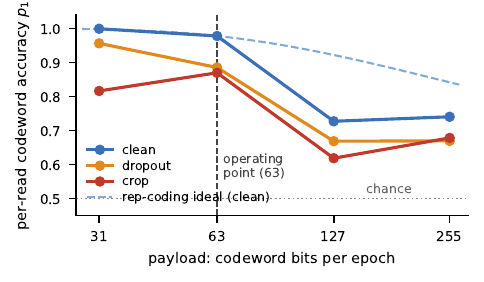}
\caption{Capacity-robustness tradeoff, measured. Per-read codeword accuracy $p_1$
for models trained at codeword length $\nb\in\{31,63,127,255\}$ (each an
otherwise-identical model, same training recipe), on the clean, dropout, and crop channels.
The dashed curve is the idealized repetition-coding bound calibrated to the clean
$\nb=63$ point. The default $63$-bit codeword is marked.}
\label{fig:payload}
\end{figure}

\smallskip\noindent\textbf{Epoch length.}
The shipped $196$-frame epoch makes the prover buffer $9.8$\,s before it can emit
(Section~\ref{sec:disc}). Sweeping the epoch over $\{40,64,100,196\}$ frames and re-reading every
channel says that buffer is close to free. The $40$-frame epoch buffers $2.0$\,s and costs at most
$0.01$ of codeword accuracy anywhere: clean $0.972$ against $0.976$, recapture $0.958$ against
$0.964$, dropout $0.867$ against $0.877$, and composition with the anonymizer $0.940$ against $0.949$.
On cropping, the long epoch's weakest benign channel, it is strictly better, $0.972$ against
$0.863$. A crop keeps a random sub-segment from a random offset, so it shifts the epoch phase as well
as shortening the clip, and at $196$ frames a clip is about one epoch with nothing to fall back on,
whereas $40$ frames give the verifier several epochs to align with (Figure~\ref{fig:streaming}); the sweep does
not separate the two. A deployment can therefore cut prover-side lookahead to $2$\,s at negligible
cost in recovery. That figure is the prover's buffer alone; the verifier's time-to-authenticate runs
on top of it. These are the sweep's own numbers, not
Table~\ref{tab:robust}'s configuration.

\section{Removal attacks in detail}\label{app:removal}
The body states what removal buys an adversary and why it is bounded. This appendix gives the
attack-by-attack measurements behind that statement: the distortion each attack needs to drive the
verifier to reject, what the per-epoch policy does when part of a stream is destroyed, and an
adaptive laundering attack trained against the recapture channel.

\begin{figure}[tbp]\centering
\includegraphics[width=\linewidth]{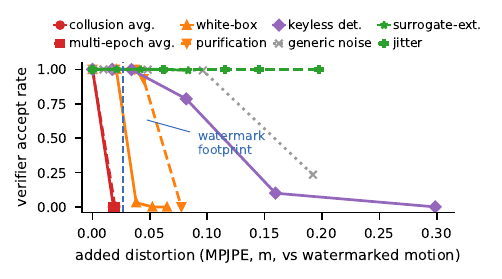}
\caption{The removal attacks on one axis: verifier accept rate versus added distortion,
grouped by adversary class. Blue dashed is the watermark's own footprint.}
\label{fig:removal}
\end{figure}

\begin{table}[tbp]\centering\footnotesize\caption{Removal attacks (companion to Figure~\ref{fig:removal}). Each row is the smallest added MPJPE, on
that attack's own strength grid, at which the verifier accept rate first reaches $0$; the watermark's
own footprint is $0.027$. Only
cross-sample averaging (bold) strips the mark
below that footprint, and only when the same motion is available under $K$ keys or across $K$ epochs.
Three never reach zero: the surrogate extractor leaves $0.99$ at $0.083$, rate/phase jitter $1.00$ at
$0.197$, and generic noise $0.23$ at $0.192$.
\textsuperscript{$\ddagger$}Re-generation is quoted root-relative; its raw $0.711$ is dominated by
global root drift.}\label{tab:adaptive}
\setlength{\tabcolsep}{3pt}
\begin{tabular}{lcc}
\toprule
Removal attack & needs & auth.\ $\to 0$ at\\
               &       & (added MPJPE, m)\\
\midrule
collusion avg.\ & $K$ keys & $\mathbf{0.018}$\\
multi-epoch avg.\ & $K$ epochs & $\mathbf{0.020}$\\
white-box PGD       & extractor wts & $0.052$\\
re-generation       & generative AE & $0.065$\textsuperscript{$\ddagger$}\\
purification AE     & generic AE    & $0.077$\\
keyless detector    & trains det.\  & $0.299$\\
generic noise       & none          & never (to $0.23$)\\
surrogate extr.\ & trains surr.\ & never (to $0.99$)\\
rate/phase jitter   & none          & never (to $1.00$)\\
\bottomrule
\end{tabular}

\end{table}

\paragraph{The removal attacks.}
Figure~\ref{fig:removal} places every removal attack on one accept-versus-distortion
axis, and Table~\ref{tab:adaptive} lists the distortion each needs to drive the verifier
to reject; the axis is the injected MPJPE, measured against the observed watermarked
motion. The attacks fall into four classes. \emph{(i)~Cross-sample averaging}
(multi-epoch or collusion) needs the least distortion, cancelling the keyed carrier at
${\sim}0.02$\,m added MPJPE, below the watermark's own $0.027$\,m footprint, by averaging
out a code that differs across epochs or keys rather than fighting the extractor. What it
needs instead is the same motion recorded $K$ times: either one performance re-signed at
$K$ epochs, which a live session does not produce because the next epoch carries different
motion, or $K$ signers performing one trajectory under $K$ real keys inside a single session,
sharing the verifier's nonce. Both are protocol-level
preconditions rather than signal processing.
\emph{(ii)~Single-clip gradient removal} needs real distortion: white-box projected gradient descent (PGD), granted
the extractor weights as in the forgery benchmark, collapses authentication only near the
footprint, and a purification autoencoder needs $0.077$\,m. \emph{(iii)~A keyless presence
detector}, a classifier trained on signed-versus-anonymized motion and run as an
$\ell_\infty$ gradient attack, does gain a genuine advantage over generic noise,
driving the accept rate to $0.10$ by ${\sim}0.16$\,m and to zero by $0.30$\,m. The
detector is a poor \emph{distinguisher} (its own WM-IND AUC is $0.51$, interval
$[0.49,0.54]$) but a useful \emph{gradient}: both branches put energy in the same
high-frequency band, and descending that gradient degrades the band without separating
the branches or reading the keyed codeword. Undetectability constrains what an observer
can \emph{learn}, not what a gradient can \emph{destroy}, which is the
undetectability/robustness tension on the active axis. \emph{(iv)~But
reading the code fails}: a surrogate-\emph{extractor} transfer attack and rate/phase
jitter never suppress the verifier ($0.99$ at $0.08$\,m, $1.0$ at $0.20$\,m), because the
codeword is keyed and a keyless surrogate cannot predict it. \emph{(v)~Re-generation}, the
canonical latent-watermark attack, re-encodes the observed motion through the public VAE and
re-samples $\zp$ (the inverse of the embedding). It too strips the mark to chance, but its cost
is global, not local: MoSign's VAE is not cycle-consistent on its own generations, so the
round-trip keeps the local pose (root-relative MPJPE $0.07$) while drifting the global trajectory
$0.71$, a locally faithful but globally displaced laundering kept off the local-distortion axis of
Figure~\ref{fig:removal}. No removal is impersonation: forgery fails, so stripping an observed signature only denies a single
observation. A splicing adversary that assembles candidate streams from observed epochs
gains nothing either: its per-stream false-accept is $0$ over the measured trials, and
the best-of-$Q$ curve extrapolated from that rate stays at $0$ out to $Q{=}10^4$.
Splicing is the harder target of the two, since a spliced stream carries no epoch's
codeword intact, whereas the query adversary perturbs a genuine one.

\begin{figure}[tbp]\centering
\includegraphics[width=\linewidth]{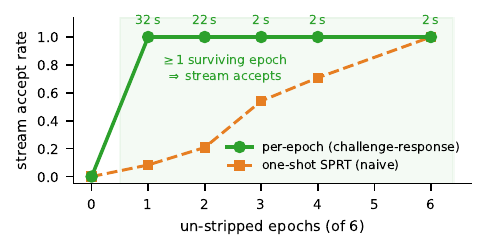}
\caption{Per-epoch authentication (green) versus a naive one-shot SPRT (orange), as a function
of how many of a six-epoch stream's epochs survive removal ($24$ streams, worst case: the rest
\emph{fully} stripped). Green labels are the time-to-authenticate.}
\label{fig:streaming}
\end{figure}

\paragraph{Per-epoch authentication and what removal buys.}
Because MoSign authenticates per epoch (the challenge-response renews each epoch), a
stream authenticates if any single epoch is left un-corrupted. Figure~\ref{fig:streaming}
strips a fraction of a six-epoch stream's epochs \emph{entirely} and shows the per-epoch
policy accepts every stream whenever ${\ge}1$ epoch survives, in any position, failing
only when all six are stripped, whereas a naive one-shot test over the whole stream degrades
monotonically ($1.00\!\to\!0.08$ as survivors fall from six to one). The cost is
graceful: the time-to-authenticate rises from $2$\,s to ${\sim}32$\,s as clean epochs grow scarce,
and the stream still authenticates. Corrupting \emph{every} epoch of a live stream with no gap requires a
continuous in-line presence on the motion channel; that adversary can already deny service by
blocking the stream, though laundering is quieter than blocking. The load-bearing
property is the other one: removal is not impersonation. A stripped stream authenticates as no one,
because forging a valid mark needs the key, so the strongest outcome
any removal reaches is denial of service. Figure~\ref{fig:streaming} strips epochs rather than
corrupting all of them at once, so it bounds the per-epoch policy's behaviour under partial
contamination rather than under averaging.

\begin{figure}[tbp]\centering
\includegraphics[width=\linewidth]{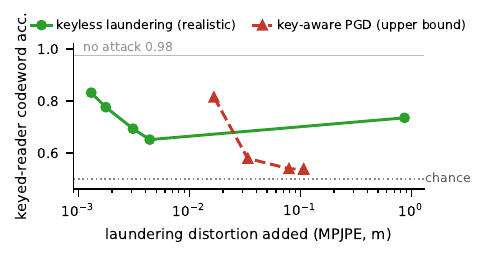}
\caption{Adaptive in-channel laundering of the recapture path: keyed-reader codeword accuracy
versus laundering distortion, for a realistic keyless laundering net (green) and a key-aware
white-box PGD upper bound (red).}
\label{fig:g9}
\end{figure}

\paragraph{Adaptive laundering of the recapture channel.}
The receiver-side keyed reader (C4) invites the recapture analogue of the removal benchmark: a
man-in-the-middle on the \emph{video} channel could launder the honest avatar to strip the key-holder's
read-out and reject a legitimate user (availability, not impersonation). We train a laundering net on the
avatar, push it through the differentiable recapture channel and the frozen keyed reader, and minimize the
reader's recovery under a distortion penalty (Figure~\ref{fig:g9}). A realistic keyless attacker degrades the
key-holder's recovery from $0.98$ to $0.65$ at $0.004$\,m added MPJPE, a sixth of the
watermark's own footprint, and never reaches chance anywhere in the sweep: that
$0.65$ is the attacker's best point, and at $0.87$\,m, thirty times that footprint, the key holder
still reads $0.74$.
Even an over-powered key-aware white-box PGD, granted the per-session key, floors at $0.54$ and never reaches
chance. As with the motion-domain removals, laundering forges nothing and per-epoch renewal
re-authenticates on the next clean epoch, so sustained denial demands continuous in-line channel
control.

\end{document}